\documentclass[runningheads]{llncs}
\usepackage{amsmath}
\usepackage{amssymb}
\usepackage{booktabs}
\usepackage{graphicx}
\usepackage[hidelinks]{hyperref}
\usepackage[capitalise,noabbrev]{cleveref}

\title{Repeated-Game Security for Restaking-Based Verifiable Inference}
\titlerunning{Repeated-Game Security for Restaking-Based Verifiable Inference}

\author{Zhenhang Shang \and Yingzhe Yu \and Kani Chen}
\authorrunning{Z. Shang et al.}
\institute{The Hong Kong University of Science and Technology, Hong Kong, China\\
\email{zshangab@connect.ust.hk}}

\newcommand{\Stake}{S}                 % stake amount
\newcommand{\Chal}{c}                  % challenge probability
\newcommand{\Slash}{\alpha}            % slashing fraction (alpha avoids collision with per-query reward r)
\newcommand{\Disc}{\delta}             % discount factor
\newcommand{\Cheat}{p}                 % cheating probability
\newcommand{\Rep}{\rho}                % reputation score
\newcommand{\Saving}{\Delta}           % per-request cost saving from cheating
\newcommand{\Horizon}{T}               % game length
\newcommand{\Vest}{\tau}               % vesting period
\newcommand{\Eps}{\varepsilon}

\newcommand{\HONEST}{\mathsf{H}}
\newcommand{\CHEAT}{\mathsf{C}}

\newcommand{\E}{\mathbb{E}}
\newcommand{\Prob}{\mathbb{P}}

\begin{document}

\maketitle

\begin{abstract}
Restaking-based protocols are emerging as a practical way to provide verifiable LLM inference without the high proving cost of zkML or the hardware trust assumptions of TEEs. Their security is usually justified by a one-round slashing condition: a rational provider should not cheat when the expected penalty exceeds the per-query cost saving. This paper shows that this condition can overstate security once inference is supplied repeatedly under the same stake. We model restaking-based verifiable inference as a discounted repeated game and identify a repeated-game gap caused by proportional slashing: detected deviations reduce the stake exposed to future penalties, while the saving from cheaper inference is earned again on every query. We derive the gap in closed form, show that it is robust to minimum-stake ejection, and extend it to memoryless bounded-slashing protocols, a class that captures deployed verifiable-inference systems. We then propose a deployable mechanism combining history-dependent challenges, reputation-weighted slashing, and stake vesting that restores infinite-horizon subgame-perfect incentive compatibility against stationary mixed-strategy deviations above an explicit discount-factor threshold, without per-query cryptographic verification. Two further results support the construction: the first direct measurement of the audit signal's detectability response on open-weight models from 0.5B to 14B parameters, which is concave as the mechanism requires in all nine (contracted, substitute) pairs, with amplitude increasing in the cost ratio between the two models and measured security thresholds well below our conservative calibration; and a Stackelberg audit-budget analysis showing that signal responsiveness substitutes for baseline auditing, cutting the required baseline audit rate by $2.6\times$ at target discount factor $0.95$. Calibrating to deployed parameters, surveyed protocols pass the one-round IC test but fail repeated-game IC for $\Disc \in [0.92,0.98]$, with deviation-profit fractions of $1.5\%$--$8\%$; in simulation, the mechanism cuts deviation profit by $31\%$--$54\%$ while preserving low-latency economic verification.
\keywords{verifiable computation \and restaking \and EigenLayer \and cryptoeconomic security \and repeated games \and mechanism design \and LLM inference \and slashing}
\end{abstract}

% #############################################################################
% #############################################################################
%                                MAIN BODY
% #############################################################################
% #############################################################################

\section{Introduction}
\label{sec:introduction}

% \paragraph*{Motivation.}
Restaking has become a central economic layer for verifiable services on Ethereum: EigenLayer supports a growing set of AVSs securing billions of dollars in restaked capital~\cite{eigenlayer-whitepaper}, and verifiable LLM inference is among the most visible workloads. Protocols such as EigenAI~\cite{eigenai2026}, VeriLLM~\cite{verillm2025}, and Sertn AVS~\cite{sertn-avs} follow a common template in which providers post stake, serve low-latency inference, and face randomized re-execution challenges, with disagreement punished through slashing. The appeal is economic rather than purely cryptographic: zkML systems~\cite{ezkl,verifiable-evaluations-zksnark-2024} remain orders of magnitude more expensive than native execution at frontier scale, while TEE-based approaches~\cite{apple-pcc} shift the trust assumption to hardware. The design's viability depends on whether the security calculation is calibrated to the way inference is actually supplied.

The standard analysis largely inherits the logic of proof-of-stake slashing: honest behavior is treated as optimal whenever the expected penalty $\Chal \Slash \Stake$ exceeds the cheating gain $\Saving$, as formalized in \cref{def:1ric}. This condition is natural for a single isolated decision, but verifiable inference is supplied repeatedly under the same economic contract---a production provider may answer $10^4$ to $10^6$ queries before exiting---so the relevant object is a discounted repeated game rather than a one-shot incentive constraint~\cite{mailath-samuelson-2006,fudenberg-maskin-1986}. Stage-game incentive compatibility is then only a local condition: it does not control deviations that exploit the continuation value of the relationship, including randomized cheating that remains profitable even under one-round slack. This repeated-interaction gap is absent from the protocol-level analyses of verifiable inference systems that we examined.

We address this gap by modeling verifiable inference as a discounted repeated game between a rational provider and an on-chain protocol. The analysis identifies when the single-round condition ceases to be sufficient, measures the resulting shortfall in economic security, and leads to a deployable mechanism that restores subgame-perfect incentives at realistic operator patience, as follows.
\begin{itemize}
\item We introduce a repeated-game security model for restaking-based verifiable inference, defining $\Horizon$-round subgame-perfect incentive compatibility and an $\Eps$-approximate variant.
\item We show that the standard one-round slashing condition can overstate security: with proportional slashing $\Slash < 1$, a protocol can satisfy one-round IC with slack and still fail $\infty$-SPIC whenever $\eta < \frac{\Disc \Chal \Slash \Saving}{1-\Disc}$, with deviation profit in closed form. The gap survives minimum-stake ejection and extends to memoryless bounded-slashing protocols, a class covering the deployed systems in our survey.
\item We construct a deployable mechanism combining history-dependent challenges, reputation-weighted slashing, and stake vesting that restores repeated-game incentives against stationary mixed-strategy deviations whenever $\Disc \geq \Disc^\ast$, the unique solution to \eqref{eq:disc-star}, without per-query cryptographic verification and within the EigenLayer ServiceManager interface.
\item We provide the first direct measurement of the detectability response $\mu(\Cheat)$ the mechanism relies on: on open-weight Qwen2.5 models from 0.5B to 14B parameters across nine (contracted, substitute) pairs, the audit signal is concave in the cheating rate as required ($q\in[0.70,0.89]$), its amplitude grows with the contracted-to-substitute cost ratio as $\min(1,0.45x^{0.41})$, and the measured responses lower $\Disc^\ast$ to $0.79$--$0.92$, below the conservative calibration (\cref{app:calibration}).
\item We calibrate the model to deployed parameters: for $\Disc \in [0.92,0.98]$, every surveyed protocol passes the one-round IC test but fails repeated-game IC, with deviation-profit fractions of $1.5\%$--$8\%$. The mechanism's required stake stays bounded as $\Disc \to 1$ (\cref{cor:asymptotic}), while the strengthened one-round bound diverges. A Stackelberg audit-budget analysis shows that signal responsiveness substitutes for baseline auditing, cutting the required baseline audit rate by $2.6\times$ at $\Disc_{\mathrm{target}}=0.95$ (\cref{sec:discussion}).
\end{itemize}

\section{Background and Related Work}
\label{sec:background}
Restaking turns staked ETH into reusable economic security for external services: in EigenLayer, validators or LST holders pledge stake to Actively Validated Services, with slashing enforced through the ServiceManager interface~\cite{eigenlayer-whitepaper}. The standard security argument follows proof-of-stake slashing~\cite{ethereum-slashing} as a single-round condition: deviation is unattractive when expected slashing exceeds the gain. Buterin's critique of restaking targets systemic risks such as capital reuse and correlated slashing~\cite{buterin-restaking-risk}; our focus is orthogonal: even with fixed capital and slashing rules, repeated interaction can invalidate a one-round-correct calculation. Verifiable LLM inference is a natural setting for this distinction: protocols such as EigenAI~\cite{eigenai2026}, VeriLLM~\cite{verillm2025}, and Sertn AVS~\cite{sertn-avs} all rely on randomized re-execution, provider stake, and slashing. Our class result shows that the gap is a property of this mechanism family, not of a particular implementation.

The analysis draws on repeated games with imperfect public monitoring~\cite{fudenberg-maskin-1986,abreu-pearce-stacchetti-1990,mailath-samuelson-2006} and inspection games~\cite{avenhaus-canty-1996}. Daian et al.~\cite{daian-2020-flashboys} show that single-round incentive reasoning can misstate MEV incentives; single-round audit models also appear in Game of Coding~\cite{game-of-coding} and in trustless verification under uncertainty~\cite{trustless-uncertainty-2025}. Our setting differs because the same stake supports many future queries: proportional slashing lowers the penalty base after a detected deviation, while the saving is earned again each round---an intertemporal mismatch invisible in one-round models.

The closest repeated-game analysis in the blockchain literature is Arbitrum's dispute game~\cite{kalodner-arbitrum-2018}, where one correct challenger deters false assertions because a caught cheater loses its bond in full. Two features separate that setting from ours: Arbitrum's deterrent relies on full confiscation, the full-slashing regime of \cref{rem:full-slashing}, whereas restaking-based inference uses partial slashing; and rollup disputes resolve within a fixed window, whereas an inference provider answers $10^4$ to $10^6$ queries under a slowly depreciating stake.

\section{Model and Definitions}
\label{sec:model}

\subsection{System Architecture}
\label{sec:model-arch}

We model a restaking-based verifiable LLM inference service with four actors, following EigenAI, VeriLLM, and Sertn AVS. A requester submits a query and pays a fee. A provider $P$ posts stake $\Stake$, serves inference for a per-query reward $r$, and privately chooses $a \in \{\HONEST,\CHEAT\}$: under $\HONEST$ it runs the contracted model, under $\CHEAT$ a cheaper substitute that saves computation cost $\Saving$. A verifier $V$ re-executes the query with probability $\Chal$ and reports the outcome to the on-chain mechanism $M$, which schedules audits, records outcomes, and slashes a fraction $\Slash \in (0,1]$ of the provider's current stake upon detected cheating; slashing is irreversible. The baseline assumes an honest verifier and perfect audit correctness; \cref{sec:eq-robust} discusses collusion and noisy audits.

\subsection{The Stage Game}
\label{sec:model-stage}

In the one-shot stage game $\mathcal{G}$, the provider chooses $a\in A_P=\{\HONEST,\CHEAT\}$. The binary action space is without loss for the impossibility result, since richer effort choices reduce to the most profitable detectable deviation (\cref{sec:gap-class}). Let $\xi \sim \mathrm{Bernoulli}(\Chal)$ be the audit indicator, sampled independently of the provider's action. The provider's payoff is $u_P(a,\Stake,\xi)=r$ if $a=\HONEST$, and $u_P(a,\Stake,\xi)=r+\Saving-\xi \Slash \Stake$ if $a=\CHEAT$, so in expectation honest execution earns $r$, while cheating earns $r+\Saving-\Chal\Slash\Stake$. The reward is paid when the response is delivered, before the randomized audit settles: deferring payment until after a possible re-execution would reintroduce the latency this design exists to avoid, so slashing is the only ex post instrument and a detected deviation still nets the reward $r$.

The provider's action is private; the protocol observes only the audit indicator and its outcome: an audited query reveals cheating perfectly, while an unaudited query carries no information. We write the public history before round $t$ as $h_t^{\mathrm{pub}}=\bigl((\xi_\tau,\mathrm{outcome}_\tau)\bigr)_{\tau<t}$. The corresponding one-round security condition is the standard cryptoeconomic slashing constraint.

\begin{definition}[1-Round Incentive Compatibility]
\label{def:1ric}
Parameters $(\Stake,\Chal,\Slash,\Saving)$ satisfy \emph{1-Round Incentive Compatibility} (\emph{1-Round IC}) if honest execution weakly dominates cheating in the stage game, equivalently $\Chal \Slash \Stake \geq \Saving$.
\end{definition}

This is the incentive test used, often implicitly, in restaking-based verifiable inference. \cref{sec:gap} shows that it does not imply security once the same provider interacts with the protocol repeatedly.

\subsection{The Repeated Game}
\label{sec:model-repeated}

The repeated game $\mathcal{G}^{\Horizon}$, with $\Horizon \in \mathbb{N} \cup \{\infty\}$, consists of $\Horizon$ rounds of the stage game in \cref{sec:model-stage}. The stake state is carried across rounds. In round $t$, the provider chooses an action $a_t \in \{\HONEST,\CHEAT\}$ and the protocol samples an audit indicator $\text{audit}_t \in \{0,1\}$. For a provider strategy $\sigma$, the discounted payoff is $U_P(\sigma)=\E_\sigma\bigl[\sum_{t=0}^{\Horizon-1}\Disc^t u_P(a_t,\Stake_t,\text{audit}_t)\bigr]$.

A mixed stationary strategy cheats with a fixed probability $\Cheat \in [0,1]$ every round; a public behavioral strategy maps public histories to cheating probabilities, $\sigma_t:\mathcal{H}^{\mathrm{pub}}_t\to[0,1]$; general behavioral strategies may also use the provider's private history. By the one-shot deviation principle (\cref{rem:osdp}), it suffices to analyze public behavioral deviations on the equilibrium path.

The discount factor $\Disc \in (0,1)$ summarizes the provider's continuation value. One round corresponds to one settlement interval of the service, which we take to be one day: audits are sampled per query, but slashing and reward settlement are batched. We decompose the factor into an operator-retention component and a capital-cost component,
\begin{equation}
\Disc
=
(1-r_{\mathrm{out}})(1-r_{\mathrm{cap}}),
\label{eq:disc-decomp}
\end{equation}
where $r_{\mathrm{out}}$ is the per-round opt-out hazard and $r_{\mathrm{cap}}$ the per-round opportunity cost of locked capital. Two features matter for interpretation: $\Disc$ is not subjective time preference but the probability that the staking relationship survives another round---pure time preference at a $5\%$--$10\%$ annual rate would imply a daily factor above $0.9997$, negligible at this frequency---and the binding component is the exit hazard, estimated at $r_{\mathrm{out}}\in[0.5\%,1.5\%]$ per day from EigenLayer deregistration events, with $r_{\mathrm{cap}}\in[0.013\%,0.020\%]$ per day from on-chain low-risk yields, giving $\Disc \in [0.92,0.98]$ at one-day rounds. The corresponding expected operating horizon of $1/(1-\Disc)\approx 12$ to $50$ settlement intervals is conservative relative to observed operator tenures of several months (\cref{app:calibration}).

\subsection{Equilibrium Concepts}
\label{sec:model-equilibrium-concepts}

We use standard concepts for repeated games with imperfect public monitoring~\cite{mailath-samuelson-2006}. A \emph{public strategy} maps each public history to a (possibly mixed) action, $\sigma_t:\mathcal{H}^{\mathrm{pub}}_t\to[0,1]$.

\begin{definition}[SPE and PPE]
\label{def:spe}
\label{def:ppe}
A strategy profile $\sigma$ is a \emph{subgame perfect equilibrium} (SPE) of $\mathcal{G}^{\Horizon}$ if, after every history $h_t$, the continuation strategies $\sigma\!\restriction_{h_t}$ form a Nash equilibrium of the continuation game. It is a \emph{public-perfect equilibrium} (PPE)~\cite{abreu-pearce-stacchetti-1990} if all strategies are public and, after every public history $h_t^{\mathrm{pub}}$, the continuation strategies form a Nash equilibrium of the continuation game.
\end{definition}

\begin{remark}[One-shot deviation principle]
\label{rem:osdp}
Under discounting, a public strategy profile is a PPE if and only if no player has a profitable \emph{one-shot deviation}: a deviation at a single public history, followed by a return to the prescribed continuation strategy~\cite[Thm.~2.1.1]{mailath-samuelson-2006}. All equilibrium verification in this paper proceeds by ruling out profitable one-shot deviations.
\end{remark}

Because the provider's action is private and only audit outcomes are public, PPE is the natural refinement; the mechanism state in our construction is itself public, so public strategies are without loss for the analysis of \cref{sec:equilibrium}. The folk theorem explains why repeated incentives need not follow from one-round incentives~\cite{abreu-pearce-stacchetti-1990,fudenberg-maskin-1986}, but gives no deployable protocol for a given finite patience level; our construction characterizes the required threshold $\Disc^\ast$ in closed form.

\subsection{Incentive Compatibility Notions}
\label{sec:model-security-defs}

The stage game has a single strategic player, so the equilibrium concepts above collapse to optimality conditions on the provider: the security notions below are properties of a \emph{mechanism design} problem in which the protocol commits to an audit and slashing rule and the provider best-responds (\cref{sec:discussion} discusses a strategic audit policy). Subgame perfection then requires that honesty remains optimal after every history.

\begin{definition}[$\Horizon$-SPIC and $\Eps$-Approximate $\Horizon$-SPIC]
\label{def:tspic}
\label{def:eps-tspic}
Parameters $(\Stake_0,\Chal,\Slash,\Saving)$ satisfy \emph{$\Horizon$-Round Subgame-Perfect Incentive Compatibility}, or $\Horizon$-SPIC, under discount factor $\Disc$ if the always-honest strategy is a subgame perfect equilibrium of the repeated game $\mathcal{G}^{\Horizon}$. They satisfy \emph{$\Eps$-Approximate $\Horizon$-SPIC} if no strategy improves on always-honest behavior by more than an $\Eps$ fraction of the honest payoff,
$
\sup_{\sigma \neq \HONEST^{\Horizon}}
\frac{
U_P(\sigma)-U_P(\HONEST^{\Horizon})
}{
U_P(\HONEST^{\Horizon})
}
\leq
\Eps$.
\end{definition}
1-Round IC is necessary but not sufficient for $\Horizon$-SPIC (\cref{prop:hierarchy} in \cref{app:proofs}); the gap between them is the subject of the next section.

\section{The Repeated-Game Gap}
\label{sec:gap}
\subsection{A Mixed-Strategy Deviation}
\label{sec:gap-deviation}

Consider a stationary mixed strategy $\sigma\equiv\Cheat\in[0,1]$, cheating independently each round with probability $\Cheat$. If stake were fixed at $\Stake$, the expected per-round payoff would be $\E[u_P \mid \Cheat]=r+\Cheat(\Saving-\Chal\Slash\Stake)$; under 1-Round IC, $\Chal\Slash\Stake\geq\Saving$, so this one-round calculation makes mixed cheating appear unprofitable.

The calculation fails because proportional slashing changes the future stake at risk: for $\Slash<1$, the relevant regime in practice,\footnote{Ethereum slashes only a fraction of validator stake, and EigenLayer AVS designs typically use partial slashing rates. \cref{rem:full-slashing} discusses the full-slashing limit.} stake evolves as
\begin{equation}
\Stake_{t+1}=\Stake_t\bigl(1-\Slash\mathbf{1}[a_t=\CHEAT,\text{ audited}]\bigr).
\label{eq:stake-dynamics}
\end{equation}
Under stationary cheating, $\E[\Stake_t\mid\Cheat]=\Stake_0(1-\Cheat\Chal\Slash)^t$: slashing exposure decays geometrically, while the cheating gain $\Saving$ is earned again each round, so a condition binding at the initial stake can fail along the continuation path, giving
\begin{equation}
\E[u_t\mid\Cheat]=r+\Cheat\bigl(\Saving-\Chal\Slash\Stake_0(1-\Cheat\Chal\Slash)^t\bigr).
\label{eq:stage-with-decay}
\end{equation}

\begin{lemma}[Mixed-Strategy Gap]
\label{lem:gap}
Fix $(\Stake_0,\Chal,\Slash,\Saving,\Disc)\in\mathbb{R}_{>0}^4\times(0,1)$ with $\Slash\in(0,1)$. Suppose 1-Round IC binds at the initial stake, $\Chal\Slash\Stake_0=\Saving$. Under \eqref{eq:stake-dynamics}, the always-cheat strategy $\Cheat=1$ yields a strictly higher discounted payoff than always-honest behavior:
\begin{equation}
\Delta U
=
U_P(\Cheat=1)-U_P(\HONEST)
=
\frac{\Disc\Chal\Slash\Saving}
{(1-\Disc)(1-\Disc+\Disc\Chal\Slash)}
>0 .
\label{eq:gap-formula}
\end{equation}
Relative to $U_P(\HONEST)=r/(1-\Disc)$,
\begin{equation}
\frac{\Delta U}{U_P(\HONEST)}
=
\frac{\Disc(\Chal\Slash)^2}
{1-\Disc+\Disc\Chal\Slash}
\cdot
\frac{\Stake_0}{r}.
\label{eq:gap-fraction}
\end{equation}
\end{lemma}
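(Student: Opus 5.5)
The plan is a direct computation: evaluate both discounted payoffs in closed form under the stake dynamics \eqref{eq:stake-dynamics} and subtract. First, always-honest behavior never triggers slashing, so $\Stake_t\equiv\Stake_0$ and every stage payoff equals $r$. This gives $U_P(\HONEST)=\sum_{t\ge0}\Disc^t r=r/(1-\Disc)$.

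Second, under $\Cheat=1$ the stage payoff in round $t$ is $u_t=r+\Saving-\xi_t\Slash\Stake_t$, where $\xi_t\sim\mathrm{Bernoulli}(\Chal)$ is the audit indicator. The one point that needs care is the expectation of the product $\xi_t\Stake_t$. By \eqref{eq:stake-dynamics}, $\Stake_t=\Stake_0\prod_{\tau<t}(1-\Slash\xi_\tau)$, so $\Stake_t$ is measurable with respect to the audits before round $t$. Since audits are sampled independently of everything else, $\xi_t$ is independent of $\Stake_t$. Hence $\E[\xi_t\Stake_t]=\Chal\,\E[\Stake_t]$, and by independence across rounds $\E[\Stake_t]=\Stake_0(1-\Chal\Slash)^t$. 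This recovers \eqref{eq:stage-with-decay} at $\Cheat=1$. Because $\Stake_t\le\Stake_0$ and payoffs are bounded, Fubini/dominated convergence justifies exchanging expectation and the infinite discounted sum.

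Third, sum the geometric series. Since $\Disc(1-\Chal\Slash)<1$,
\[
U_P(\Cheat=1)=\frac{r+\Saving}{1-\Disc}-\frac{\Chal\Slash\Stake_0}{1-\Disc(1-\Chal\Slash)} .
\]
Substituting the binding condition $\Chal\Slash\Stake_0=\Saving$ gives
\[
\Delta U=\Saving\Bigl(\frac{1}{1-\Disc}-\frac{1}{1-\Disc+\Disc\Chal\Slash}\Bigr)=\frac{\Disc\Chal\Slash\Saving}{(1-\Disc)(1-\Disc+\Disc\Chal\Slash)},
\]
which is \eqref{eq:gap-formula}. Strict positivity is immediate because $\Disc,\Chal,\Slash,\Saving>0$ and $\Disc<1$.

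Finally, for \eqref{eq:gap-fraction}, divide $\Delta U$ by $r/(1-\Disc)$ and substitute $\Saving=\Chal\Slash\Stake_0$ once more. The factor $(1-\Disc)$ cancels, leaving $\Disc(\Chal\Slash)^2\Stake_0/\bigl(r(1-\Disc+\Disc\Chal\Slash)\bigr)$. The only step I expect to need explicit justification is the independence argument for $\E[\xi_t\Stake_t]$ in the second step; the rest is routine algebra.
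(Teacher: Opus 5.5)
Your proposal is correct and matches the paper's proof step for step. Like the paper, you compute $\E[\Stake_t]=\Stake_0(1-\Chal\Slash)^t$ under always-cheat, sum the discounted geometric series, substitute the binding condition $\Chal\Slash\Stake_0=\Saving$, and divide by $r/(1-\Disc)$. Your extra justification is a welcome addition: you show $\xi_t$ is independent of $\Stake_t$ (so $\E[\xi_t\Stake_t]=\Chal\,\E[\Stake_t]$) and invoke Fubini to exchange expectation and the infinite sum, both of which the paper uses implicitly.
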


\begin{proof}
At $\Cheat=1$, an audited cheating round occurs with probability $\Chal$ each round, so the expected stake is $\E[\Stake_t]=\Stake_0(1-\Chal\Slash)^t$. Using the binding condition $\Chal\Slash\Stake_0=\Saving$, \eqref{eq:stage-with-decay} gives $\E[u_t\mid\Cheat=1]=r+\Saving\bigl(1-(1-\Chal\Slash)^t\bigr)$. Discounting and summing,
\[
U_P(\Cheat=1)
=
\frac{r}{1-\Disc}
+
\Saving
\left(
\frac{1}{1-\Disc}
-
\frac{1}{1-\Disc(1-\Chal\Slash)}
\right).
\]
Subtracting $U_P(\HONEST)=r/(1-\Disc)$ and simplifying the bracket yields \eqref{eq:gap-formula}. Both factors in the denominator are positive and the numerator is strictly positive for $\Slash>0$, so $\Delta U>0$. Substituting $\Saving=\Chal\Slash\Stake_0$ and dividing by $U_P(\HONEST)$ gives \eqref{eq:gap-fraction}.
\end{proof}

Full slashing closes the gap (\cref{rem:full-slashing} in \cref{app:proofs}): under $\Slash=1$, detection terminates the relationship, and the one-round condition alone suffices. As $\Disc\to1$, the relative gain approaches a term proportional to $\Chal\Slash\Stake_0/r$: the gap is largest for patient providers with high stake relative to per-query rewards, the deployed AVS regime.

Minimum-stake ejection does not remove the gap: a deviator simply cheats until ejection, and since ejection requires $N^\ast$ audited failures, the provider retains nearly all of the gain in \cref{lem:gap}. \cref{thm:gap-eject} in \cref{app:proofs} formalizes this: the cheat-until-ejection payoff loses only a factor $\zeta=\E[\Disc^\tau]$ relative to \eqref{eq:gap-fraction}, where $\tau$ is the negative-binomial ejection time. For deployed ranges, $\zeta \in [10^{-60},10^{-9}]$ (\cref{rem:eject-bite}), negligible relative to the $10^{-2}$ scale of \eqref{eq:gap-fraction}: ejection affects only the terminal boundary condition.

\subsection{The Main Gap Theorem}
\label{sec:gap-theorem}

We now allow the one-round condition to hold with strict slack and characterize the extra slashing budget needed to rule out the repeated-game deviation.

\begin{theorem}[Repeated-Game Gap with Slack]
\label{thm:gap}
Let $(\Stake_0,\Chal,\Slash,\Saving,\Disc)$ be protocol parameters with $\Slash \in (0,1)$ and $\Disc \in (0,1)$. Define the 1-Round IC slack by
\begin{equation}
\eta
=
\Chal\Slash\Stake_0-\Saving
\geq
0 .
\label{eq:slack-def}
\end{equation}
If
\begin{equation}
\eta
<
\frac{\Disc\Chal\Slash\Saving}{1-\Disc},
\label{eq:slack-threshold}
\end{equation}
then the parameters do not satisfy $\infty$-SPIC under discount factor $\Disc$. Equivalently, $\infty$-SPIC requires
\begin{equation}
\Chal\Slash\Stake_0
\geq
\Saving
\left(
1+
\frac{\Disc\Chal\Slash}{1-\Disc}
\right).
\label{eq:tspic-condition}
\end{equation}
\end{theorem}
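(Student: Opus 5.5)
The plan is to exhibit the same always-cheat deviation $\Cheat=1$ used in \cref{lem:gap}, but now with general slack $\eta\ge 0$. We then show that its discounted payoff strictly exceeds $U_P(\HONEST)=r/(1-\Disc)$ exactly when \eqref{eq:slack-threshold} holds. A profitable deviation at the empty history means always-honest is not a best response in the full game. It is therefore not a Nash equilibrium of $\mathcal{G}^\infty$, and a fortiori not subgame perfect, so $\infty$-SPIC fails. The ``equivalently'' clause is the contrapositive: rearranging \eqref{eq:slack-threshold} with $\eta=\Chal\Slash\Stake_0-\Saving$ gives the negation of \eqref{eq:tspic-condition}. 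Hence \eqref{eq:tspic-condition} is a necessary condition, which is all the statement asserts.

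The first step is the per-round expectation. Under $\Cheat=1$ the provider cheats every round, so the slashing charge in round $t$ is $\Slash\Stake_t\xi_t$. Here $\Stake_t$ is measurable with respect to $\xi_0,\dots,\xi_{t-1}$, and the audit indicator $\xi_t$ is independent of the past. This gives $\E[\xi_t\Stake_t]=\Chal\,\E[\Stake_t]$. Iterating \eqref{eq:stake-dynamics} with independence gives $\E[\Stake_t]=\Stake_0(1-\Chal\Slash)^t$. Hence $\E[u_t]=r+\Saving-\Chal\Slash\Stake_0(1-\Chal\Slash)^t$, which is \eqref{eq:stage-with-decay} at $\Cheat=1$.

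The second step sums the discounted series, which is legitimate because all terms are bounded and $\Disc<1$:
\[
\Delta U=\frac{\Saving}{1-\Disc}-\frac{\Chal\Slash\Stake_0}{1-\Disc+\Disc\Chal\Slash}.
\]
When $\eta=0$ this reduces to \eqref{eq:gap-formula}, which serves as a consistency check against \cref{lem:gap}. Both denominators are positive, so $\Delta U>0$ is equivalent to
\[
\Chal\Slash\Stake_0(1-\Disc)<\Saving(1-\Disc+\Disc\Chal\Slash).
\]
Substituting $\Chal\Slash\Stake_0=\Saving+\eta$ turns this into $\eta(1-\Disc)<\Disc\Chal\Slash\Saving$, which is precisely \eqref{eq:slack-threshold}. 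Dividing by $1-\Disc$ and rewriting in terms of $\Chal\Slash\Stake_0$ yields the negation of \eqref{eq:tspic-condition}.

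The algebra is routine. The main point requiring care is the first step: justifying $\E[\Slash\xi_t\Stake_t]=\Chal\Slash\E[\Stake_t]$ when the stake is random and path-dependent. The independence of the audit draw from the past history, stated in \cref{sec:model-stage}, resolves this. I would also remark that the argument only shows necessity, since it tests a single deviation. Sufficiency of \eqref{eq:tspic-condition} against all behavioral deviations is not claimed, so no one-shot-deviation argument (\cref{rem:osdp}) is needed here.
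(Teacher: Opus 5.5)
Your proposal is correct and follows the paper's proof essentially line for line: the always-cheat deviation $\Cheat=1$, the expected stake $\Stake_0(1-\Chal\Slash)^t$, the discounted difference $\Saving/(1-\Disc)-\Chal\Slash\Stake_0/(1-\Disc+\Disc\Chal\Slash)$, and the substitution $\Chal\Slash\Stake_0=\Saving+\eta$. Your justification of $\E[\xi_t\Stake_t]=\Chal\,\E[\Stake_t]$ from the independence of the audit draw makes explicit a step the paper leaves implicit, and so does your remark that the theorem asserts only necessity.
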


\begin{proof}
Consider the always-cheat deviation $\Cheat=1$. An audited cheating round occurs with probability $\Chal$ each round, so $\E[\Stake_t]=\Stake_0(1-\Chal\Slash)^t$, and the round-$t$ expected payoff is $r+\Saving-\Chal\Slash\Stake_0(1-\Chal\Slash)^t$. Summing discounted payoffs and subtracting $U_P(\HONEST)=r/(1-\Disc)$ gives
\begin{equation}
\Delta U(1)
=
\frac{\Saving}{1-\Disc}
-
\frac{\Chal\Slash\Stake_0}{1-\Disc(1-\Chal\Slash)}
=
\frac{\Saving}{1-\Disc}
-
\frac{\Saving+\eta}{1-\Disc+\Disc\Chal\Slash}
=
\frac{\Disc\Chal\Slash\Saving-\eta(1-\Disc)}
{(1-\Disc)(1-\Disc+\Disc\Chal\Slash)},
\end{equation}
where the second equality uses $\eta=\Chal\Slash\Stake_0-\Saving$ from \eqref{eq:slack-def}. The denominator is positive, so $\Delta U(1)>0$ exactly when \eqref{eq:slack-threshold} holds, and the parameters then fail $\infty$-SPIC. Rearranging the complementary condition $\eta\geq\Disc\Chal\Slash\Saving/(1-\Disc)$ gives \eqref{eq:tspic-condition}.
\end{proof}

Condition \eqref{eq:tspic-condition} thus requires more than the one-round slashing budget: the multiplier $1+\frac{\Disc\Chal\Slash}{1-\Disc}$ is approximately $1.038$ at $\Disc=0.95$, $\Chal=0.01$, $\Slash=0.2$, and diverges as $\Disc \to 1$, so proportional slashing alone cannot provide uniform repeated-game security in the patient-operator limit, motivating the mechanism in \cref{sec:mechanism}.

For a finite horizon $\Horizon$, the gap condition in \eqref{eq:slack-threshold} becomes $\eta<\frac{\Disc\Chal\Slash\Saving}{1-\Disc}\bigl(1-O(\Disc^\Horizon)\bigr)$, so the infinite-horizon conclusion is already a good approximation once $\Horizon \gtrsim 1/\log(1/\Disc)$, far below operational horizons of $10^4$ to $10^6$ queries.

\subsection{Quantification on Deployed Protocols}
\label{sec:gap-eigenai}

\cref{tab:eigenai-params} in \cref{app:eval-extra} instantiates the gap of \cref{lem:gap} on EigenAI~\cite{eigenai2026}, VeriLLM~\cite{verillm2025}, and Sertn AVS~\cite{sertn-avs}, which share the memoryless bounded-slashing structure of \cref{def:memoryless,thm:class,thm:gap-eject}; \cref{fig:gap-heatmap} maps the gap across the full $(\Chal\Slash,\Disc)$ plane. Computing $\Saving$ as the API-price differential between the contracted model and a quantized or distilled substitute, all three protocols admit profitable repeated-game deviations at the percent scale---$5.4\%$--$8.2\%$ for EigenAI, $2.6\%$--$3.9\%$ for VeriLLM, and $1.5\%$--$2.5\%$ for Sertn AVS at $\Disc=0.95$---despite satisfying the one-round IC test; the simulations of \cref{sec:evaluation} show that adversary A4 realizes these gains under the calibrated dynamics.

\subsection{A Generic Class Result}
\label{sec:gap-class}

\cref{thm:gap} is stated for the stage game in \cref{sec:model-stage}; we next extend the argument to a broader class of slashing protocols that captures the common structure of deployed verifiable-inference systems.

\begin{definition}[Memoryless Bounded-Slashing Protocol]
\label{def:memoryless}
A slashing protocol $\Pi$ is \emph{$(\Chal,\Slash_{\max})$-memoryless-bounded} if, in every round $t$ and after every history $h_t$: (i)~the audit indicator is distributed as $\mathrm{Bernoulli}(\Chal)$ and is independent of $h_t$; (ii)~any detected cheating event triggers slashing of at most $\Slash_{\max}\Stake_t$, where $\Slash_{\max} \in (0,1)$; and (iii)~challenge probabilities, slashing magnitudes, and other protocol decisions depend on the history only through the current stake $\Stake_t$.
\end{definition}

\begin{theorem}[Generic Gap for Memoryless Protocols]
\label{thm:class}
Let $\Pi$ be a $(\Chal,\Slash_{\max})$-memoryless-bounded protocol with $\Slash_{\max} \in (0,1)$. For operational parameters $(\Stake_0,\Saving,\Disc)$ with $\Saving>0$, $\Disc \in (0,1)$, and $\Stake_0>0$, define the one-round slack by
$
\eta
=
\Chal\Slash_{\max}\Stake_0-\Saving .
$
If
$
\eta
<
\frac{\Disc\Chal\Slash_{\max}\Saving}{1-\Disc},
$
then $\Pi$ fails $\infty$-SPIC under discount factor $\Disc$. The deviation-profit fraction is bounded below by
\[
\Omega
\left(
\frac{
\Disc(\Chal\Slash_{\max})^2
}{
1-\Disc+\Disc\Chal\Slash_{\max}
}
\cdot
\frac{\Stake_0}{r}
\right).
\]
\end{theorem}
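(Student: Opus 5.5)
The plan is to mirror the proof of \cref{thm:gap}: exhibit the always-cheat deviation $\Cheat=1$ and show that its discounted payoff exceeds $U_P(\HONEST)=r/(1-\Disc)$. The new difficulty is that $\Pi$ need not slash exactly $\Slash_{\max}\Stake_t$. It may slash any history-dependent amount up to that cap, subject to (iii). So we cannot compute the penalty stream in closed form. Instead we will upper-bound the discounted expected penalty over all slashing rules allowed by \cref{def:memoryless}. We will then show that this worst case is exactly the proportional rule with fraction $\Slash_{\max}$, which reduces the problem to the computation already done in \cref{thm:gap}.

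Under always-cheat, the per-round reward $r+\Saving$ is collected every round regardless of the protocol. By (i), the audit indicator is $\mathrm{Bernoulli}(\Chal)$ independently of history. Write the slashed fraction in round $t$ as $a_t=a_t(\Stake_t)\in[0,\Slash_{\max}]$, which is allowed by (ii)--(iii). The stake then evolves as $\Stake_{t+1}=\Stake_t(1-a_t\xi_t)$, and the deviator's payoff is $\frac{r+\Saving}{1-\Disc}-\E\bigl[\sum_t \Disc^t \xi_t a_t \Stake_t\bigr]$. Let $V(\Stake)$ denote the supremum, over admissible rules, of the discounted expected penalty starting from stake $\Stake$. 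It satisfies the Bellman inequality
\[
V(\Stake)\le \sup_{a\in[0,\Slash_{\max}]}\Bigl[\Chal a\Stake+\Disc\Chal V\bigl(\Stake(1-a)\bigr)+\Disc(1-\Chal)V(\Stake)\Bigr].
\]
We guess the linear solution $V(\Stake)=k\Stake$. Substituting, the bracket is affine in $a$ with slope $\Chal\Stake(1-\Disc k)$. With $k=\frac{\Chal\Slash_{\max}}{1-\Disc+\Disc\Chal\Slash_{\max}}$ we have $\Disc k<1$, so the maximum is attained at $a=\Slash_{\max}$, and the fixed-point equation for $k$ holds.

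To make this rigorous, we will prove by induction on a finite horizon $\Horizon$ that the $\Horizon$-round penalty value is at most $k\Stake$. The step uses exactly the affine-in-$a$ computation above. We then let $\Horizon\to\infty$. The truncation error is at most $\Disc^{\Horizon}\Stake_0$, because cumulative slashing can never exceed the remaining stake, and this vanishes in the limit. We expect this verification, that no state-dependent rule can extract more discounted penalty than maximal proportional slashing, to be the main obstacle. The key observation is $\Disc k<1$: a unit of stake slashed now is worth more to the protocol than the discounted future penalty capacity that the unit would have carried.

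Given the bound, $\Delta U(1)\ge \frac{\Saving}{1-\Disc}-k\Stake_0$. Substituting $\Chal\Slash_{\max}\Stake_0=\Saving+\eta$ reproduces the algebra of \cref{thm:gap} with $\Slash$ replaced by $\Slash_{\max}$:
\[
\Delta U(1)\ge\frac{\Disc\Chal\Slash_{\max}\Saving-\eta(1-\Disc)}{(1-\Disc)(1-\Disc+\Disc\Chal\Slash_{\max})}.
\]
This is strictly positive under the hypothesis on $\eta$, so $\Pi$ fails $\infty$-SPIC. For the $\Omega(\cdot)$ bound, we divide by $r/(1-\Disc)$. In the regime where the slack is bounded away from the threshold, say $\eta\le\frac12\cdot\frac{\Disc\Chal\Slash_{\max}\Saving}{1-\Disc}$ (including the binding case $\eta=0$ of \cref{lem:gap}), the numerator is at least half of $\Disc\Chal\Slash_{\max}\Saving$. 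We then use $\Saving\ge\frac12\Chal\Slash_{\max}\Stake_0$, which holds whenever $\eta\le\Saving$, the relevant small-slack regime. This gives the fraction $\ge\frac14\cdot\frac{\Disc(\Chal\Slash_{\max})^2}{1-\Disc+\Disc\Chal\Slash_{\max}}\cdot\frac{\Stake_0}{r}$, matching \eqref{eq:gap-fraction} up to constants. We will state explicitly that the $\Omega$ is taken over this regime, since the bound degenerates as $\eta$ approaches the threshold.
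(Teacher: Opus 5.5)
Your proof is correct. At the top level it takes the paper's route: compare $\Pi$ with the rule that always slashes the cap $\Slash_{\max}\Stake_t$, then reuse the algebra of \cref{thm:gap}. The difference is that the paper only asserts the reduction (``any smaller penalty only increases the provider's deviation payoff''). That claim is not self-evident, because a smaller slash today leaves more stake exposed to later slashes. Your finite-horizon induction supplies the missing justification. It uses the linear value $k\Stake$ with $k=\Chal\Slash_{\max}/(1-\Disc+\Disc\Chal\Slash_{\max})$, and the one-step objective is affine in the slashed fraction with slope $\Chal\Stake(1-\Disc k)>0$. So the cap maximizes the protocol's penalty even among fully history-dependent rules, which is more than condition (iii) of \cref{def:memoryless} requires. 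The tail bound $\Disc^{\Horizon}\Stake_0$ is valid because total future slashing cannot exceed the current stake. The paper's version buys brevity; yours buys an actual proof of the one nontrivial step in the class result.

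You are also right to qualify the $\Omega$ claim. The paper attributes it to \cref{thm:gap}, but that theorem contains no profit-fraction bound; \eqref{eq:gap-fraction} is only the binding case $\eta=0$ of \cref{lem:gap}. Substituting $\Saving=\Chal\Slash_{\max}\Stake_0-\eta$ into your lower bound gives, with equality under cap slashing,
\[
\frac{\Delta U(1)}{U_P(\HONEST)}
\geq
L-\frac{\eta}{r},
\qquad
L=\frac{\Disc(\Chal\Slash_{\max})^2}{1-\Disc+\Disc\Chal\Slash_{\max}}\cdot\frac{\Stake_0}{r}.
\]
The theorem's hypothesis on $\eta$ is equivalent to $\eta<rL$. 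So the bound holds with constant $1$ for $\eta\le 0$ and with constant $\tfrac12$ for $\eta\le rL/2$, and it vanishes at the threshold. This single condition can replace your two side conditions.

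One restriction is shared by both arguments. You state explicitly that $r+\Saving$ is collected every round, and the paper assumes this silently. The assumption does real work, because condition (iii) literally admits stake-triggered ejection. With an ejection threshold above $(1-\Slash_{\max})\Stake_0$, the first detected cheat ends the relationship, as in \cref{rem:full-slashing}, and the protocol is $\infty$-SPIC at $\eta=0$. The class theorem should therefore be read as excluding ejection, which \cref{thm:gap-eject} treats separately.
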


\begin{proof}
The most favorable case for the protocol is one in which every detected cheating event uses the full allowable penalty $\Slash_{\max}\Stake_t$. Any smaller penalty only increases the provider's deviation payoff. Under this best-case penalty rule, the protocol reduces to the stage game in \cref{sec:model-stage} with $\Slash=\Slash_{\max}$, so \cref{thm:gap} applies and yields both the threshold and the profit-fraction bound.
\end{proof}

The definition covers the slashing logic of EigenAI, VeriLLM, and Sertn AVS under the ranges in \cref{tab:eigenai-params}: the failure follows from memoryless audits and bounded proportional slashing, not from a particular implementation. The mechanism in \cref{sec:mechanism} avoids the result by making audit intensity and penalties depend on the provider's public history.

The binary action model also extends to continuous cheating intensities with concave windfall and detection probability (\cref{rem:continuous} in \cref{app:proofs}): concentrating the same effective deviation at full intensity preserves the gap bounds, so continuous effort choices change which deviation is worst, not whether the gap exists.

\section{Mechanism Design for Long-Run IC}
\label{sec:mechanism}

\subsection{Design Goals}
\label{sec:mech-goals}

\cref{thm:class} rules out long-run IC for memoryless bounded-slashing mechanisms at deployed patience levels; restoring incentives requires future treatment to depend on public history, without giving up low-latency economics. We design for four properties: an explicit $\Horizon$-SPIC threshold, no per-query cryptographic verification, implementation through the EigenLayer ServiceManager interface, and bounded exposure for new operators. The mechanism combines three instruments, each acting on the continuation payoff---the channel missing from \cref{thm:class}: history-dependent challenges raise audit intensity after suspicious behavior, reputation-weighted slashing keeps penalties from falling with the remaining stake, and vesting limits the value of cheating before exit.

\subsection{History-Dependent Challenge Schedule}
\label{sec:mech-challenge}

The protocol maintains a public suspicion score $\mathcal{S}_t\in\mathbb{R}_{\geq0}$ for each operator, updated from a normalized risk signal $z_t\in[0,1]$ combining (i) output-side discrepancy from a reference token-distribution sketch committed at registration, (ii) workload shifts in latency, prompt length, or response length relative to the operator's baseline, and (iii) cross-provider agreement on redundant queries. The AVS challenger committee normalizes these components; per-signal calibration appears in \cref{sec:eq-robust}.

The score follows an exponentially weighted update,
\begin{equation}
\mathcal{S}_{t+1}
=
(1-\lambda)\mathcal{S}_t+\lambda z_t,
\qquad
\mathcal{S}_0=0,
\qquad
\lambda\in(0,1].
\label{eq:susp-update}
\end{equation}

Audit probability increases with suspicion,
\begin{equation}
\Chal(\mathcal{S}_t)
=
\min\{1,\Chal_0+\beta\mathcal{S}_t\},
\qquad
\beta>0 .
\label{eq:chal-rule}
\end{equation}
The floor $\Chal_0>0$ preserves baseline auditability at zero suspicion.

For a stationary cheating rate $\Cheat$, let $\E[z_t\mid\Cheat]=\mu(\Cheat)$, with $\mu(0)=0$, $\mu'>0$, and $\mu$ concave on $[0,1]$. The steady-state audit rate is $\bar{\Chal}(\Cheat)=\Chal_0+\beta\mu(\Cheat)$, and the corresponding per-round payoff at stake $\Stake_t$ is $\E[u_P\mid\Cheat]=r+\Cheat\bigl(\Saving-\bar{\Chal}(\Cheat)\Slash\Stake_t\bigr)$. Because audit intensity rises with suspicious behavior, the expected penalty is no longer linear in the cheating rate: repeated cheating now raises future audit exposure, reducing the appeal of the corner deviation in \cref{lem:gap}.

\paragraph*{Key assumption: signal response.}
Deterrence depends on $\mu'(0)>0$, with $\beta\mu'(0)$ as the relevant deployment quantity: a substitute matching the contracted model across all signal channels while preserving the cost saving $\Saving$ would degrade the mechanism to the memoryless baseline. Multiple signals mitigate this, as detailed in \cref{sec:eq-robust}.

\subsection{Bayesian Reputation-Weighted Slashing}
\label{sec:mech-reputation}

The second component prevents slashing exposure from falling mechanically with stake, using a two-type reputation model~\cite{kreps-wilson-1982,milgrom-roberts-1982}: an operator is either a committed-honest type, which always plays $\HONEST$, or a mixed-strategic type, which cheats in each round with an unknown rate $\Cheat^\dagger \in (0,1]$. Let $\pi_0=\Prob[\theta=\CHEAT]$ denote the registration prior, and let $\Rep_t=\Prob[\theta=\CHEAT \mid h_t^{\mathrm{pub}}]$ be the public posterior after history $h_t^{\mathrm{pub}}$. Writing $s_t=1$ for a detected cheat and $s_t=0$ otherwise, with $\Prob[s_t=1 \mid \HONEST]=0$ and $\Prob[s_t=1 \mid \CHEAT]=\Cheat^\dagger$ conditional on an audit, Bayes' rule gives
\begin{equation}
\Rep_{t+1}
=
\begin{cases}
1, & \text{if } s_t=1, \\[4pt]
\dfrac{\Rep_t(1-\Cheat^\dagger)}
{\Rep_t(1-\Cheat^\dagger)+(1-\Rep_t)},
& \text{if } s_t=0 \text{ and the round was audited}, \\[12pt]
\Rep_t, & \text{if the round was not audited}.
\end{cases}
\label{eq:rep-update}
\end{equation}
A detected cheat is conclusive in the baseline model; a clean audit lowers the posterior, and an unaudited round carries no new information. The update in \eqref{eq:rep-update} depends on the unobserved $\Cheat^\dagger$, so we use a robust posterior in the sense of robust mechanism design~\cite{bergemann-morris-2005}: at deployment, the AVS fixes an interval $[\underline\Cheat,1]$, where $\underline\Cheat$ is the smallest cheating rate it commits to deter, and works with the worst-case posterior over that interval.
\begin{definition}[Robust Bayesian Posterior]
\label{def:robust-rep}
Given public audit history $h_t^{\mathrm{pub}}$ and deployment interval $[\underline\Cheat,1]$, the \emph{robust posterior} is $\bar\Rep_t=\sup_{\Cheat^\dagger \in [\underline\Cheat,1]}\Prob[\theta=\CHEAT\mid h_t^{\mathrm{pub}},\Cheat^\dagger]$.
\end{definition}

The robust posterior admits a closed form (\cref{lem:robust-update} in \cref{app:protocol-lemmas}): it follows the recursion in \eqref{eq:rep-update} with $\Cheat^\dagger$ replaced by $\underline\Cheat$, because the posterior map of a clean audit is decreasing in the candidate cheating rate on $[\underline\Cheat,1]$. Hence $\bar\Rep_t$ is a public function of the audit counts, the number of detected cheats, and the fixed parameters $(\underline\Cheat,\pi_0)$.

The slashing rule uses the robust posterior rather than current stake alone: if cheating is detected in an audited round, the penalty is
\begin{equation}
\mathrm{slash}_t
=
(\Slash_0+\Slash_1\bar\Rep_t)\Stake_t .
\label{eq:rep-slash-rule}
\end{equation}
The baseline term $\Slash_0$ preserves the one-round incentive constraint at zero reputation risk; the reputation term raises the per-event penalty rate after suspicious public histories, breaking the stake-decay channel in \cref{lem:gap}. The floor $\underline\Cheat$ trades off deterrence of low-rate deviations against posterior decay speed for honest operators (\cref{tab:params}).

Under honest play, the robust posterior is a nonnegative supermartingale contracting by at least $(1-\underline\Cheat)$ at each clean audit, so it decays geometrically to $0$ (\cref{lem:martingale} in \cref{app:protocol-lemmas}). Two companion lemmas ibid.\ bound the operator-level costs of the reputation design: \cref{lem:sybil} shows that respawning under a fresh identity is unprofitable exactly when the registration prior exceeds $\pi_0^\ast=c_{\mathrm{reg}}\bigl(1-\Disc(1-\Chal_0\underline\Cheat)\bigr)/(\Slash_1\Chal_0\Stake)$, where $c_{\mathrm{reg}}$ is the identity-registration cost, and \cref{lem:coldstart} shows that a new honest operator's expected slashing exposure over its first $\Vest$ rounds is at most $(\Slash_0+\pi_0\Slash_1)\Chal_0\Vest\Stake_0$, linear in $\pi_0$.

\subsection{Time-Locked Stake with Vesting}
\label{sec:mech-vesting}

Adaptive challenges and reputation-weighted slashing do not by themselves eliminate cheat-then-exit deviations: a provider can cheat aggressively and withdraw before the reputation process creates enough exposure. We therefore keep stake slashable for $\Vest$ rounds after an unbonding request. A deposit made at round $t_d$ is fully slashable until the operator begins unbonding; if the operator requests withdrawal at round $t_u$, the slashable balance vests linearly over $[t_u,t_u+\Vest]$,
\begin{equation}
B_t
=
\Stake
\max
\left\{
0,
1-\frac{(t-t_u)^+}{\Vest}
\right\},
\qquad
t \geq t_d,
\label{eq:vesting-schedule}
\end{equation}
where $(x)^+=\max\{x,0\}$, and slashing is charged against $B_t$ rather than nominal stake. Rewards accrue during the vesting window, so vesting acts as a lockup rather than a transfer. Choosing $\Vest$ so that $\Disc^\Vest \geq \Disc^\ast$, where $\Disc^\ast$ is the threshold in \cref{thm:main}, makes cheat-then-exit deviations face the same continuation discipline as deviations that remain. The primitive is standard in proof-of-stake systems~\cite{eigenlayer-whitepaper}; here it preserves incentive compatibility through the exit path.

\paragraph*{Complete protocol.}
\label{sec:mech-complete}
The protocol maintains, for each operator, nominal stake $\Stake_t$, slashable vested balance $B_t$, suspicion score $\mathcal{S}_t$, robust posterior $\bar\Rep_t$, and deposit/unbonding times $(t_d,t_u)$. In each round, the operator serves the query, the AVS computes the public risk signal $z_t$ and updates $\mathcal{S}_t$ via \eqref{eq:susp-update}, samples an audit with probability $\Chal(\mathcal{S}_t)$, updates the posterior via \cref{lem:robust-update}, and applies \eqref{eq:rep-slash-rule} against the vested balance \eqref{eq:vesting-schedule} when cheating is detected; the deployment parameters of \cref{thm:main} are fixed at launch. The construction is implementable as an EigenLayer ServiceManager extension~\cite{eigenlayer-whitepaper}; the reference implementation (\texttt{protocols.py:p3\_step}) costs roughly $30{,}000$--$120{,}000$ gas per round under \cref{tab:params}.

\section{Equilibrium Analysis}
\label{sec:equilibrium}

\subsection{Main Theorem}
\label{sec:eq-main}

Let $\mu(\Cheat)=\E[z_t \mid \text{stationary cheating rate } \Cheat]$, with $\mu(0)=0$, $\mu'>0$, and $\mu$ concave on $[0,1]$ (calibrated as $\mu(\Cheat)=0.45\Cheat^{0.5}$ in \cref{sec:eq-robust}), and define the steady-state audit rate $\bar{\Chal}(\Cheat)=\Chal_0+\beta\mu(\Cheat)$ and the steady-state reputation-weighted slashing rate $\bar{\Slash}(\Cheat)=\Slash_0+\Slash_1\bar\Rep(\Cheat)$, where $\bar\Rep(\Cheat)\in[\pi_0,1]$ is the fixed point of the robust posterior update under stationary cheating and audit rate $\bar{\Chal}(\Cheat)$.

\begin{theorem}[Main Theorem]
\label{thm:main}
Fix parameters $(\Stake,\Chal_0,\beta,\lambda,\pi_0,\underline\Cheat,\Slash_0,\Slash_1,\Vest,\Saving)$ with $\Stake>0$, $\Saving>0$, $\Chal_0,\beta,\lambda\in(0,1)$, $\pi_0\in(0,1)$, $\underline\Cheat\in(0,1]$, $\Vest\geq 1$, and $\Slash_0,\Slash_1\geq 0$ with $\Slash_0+\Slash_1\leq 1$. Assume the risk-signal response $\mu$ in \cref{sec:mech-challenge} satisfies $\mu(0)=0$, $\mu'>0$, and $\mu$ concave on $[0,1]$, and that the steady-state robust posterior $\bar\Rep(\cdot)$ from \cref{def:robust-rep} is concave-increasing on $[\underline\Cheat,1]$. The protocol in \cref{sec:mech-complete} then satisfies $\infty$-SPIC against any operator whose stationary deviation rate lies in $[\underline\Cheat,1]$ whenever $\Disc \geq \Disc^\ast$. The threshold $\Disc^\ast\in(0,1)$ is the unique solution to
\begin{equation}
\Saving
=
\bar{\Chal}(\Cheat^{\mathrm{br}})
\bar{\Slash}(\Cheat^{\mathrm{br}})
\Stake
\left(
1+
\frac{\Disc^\ast}{1-\Disc^\ast}
\bigl(1-\Disc^{\ast\Vest}\bigr)
\bar{\Chal}(\Cheat^{\mathrm{br}})
\bar{\Slash}(\Cheat^{\mathrm{br}})
\right),
\label{eq:disc-star}
\end{equation}
where $\Cheat^{\mathrm{br}}\in(0,1]$ is the provider's best-response cheating rate, characterized by
\begin{equation}
\Saving
=
\bar{\Chal}(\Cheat^{\mathrm{br}})
\bar{\Slash}(\Cheat^{\mathrm{br}})
\Stake
+
\Cheat^{\mathrm{br}}
\left.
\frac{d}{d\Cheat}
\bigl[
\bar{\Chal}(\Cheat)\bar{\Slash}(\Cheat)
\bigr]
\right|_{\Cheat=\Cheat^{\mathrm{br}}}
\Stake .
\label{eq:cheat-foc}
\end{equation}
$\Disc^\ast$ is decreasing in $\beta$, $\Slash_1$, $\Vest$, $\pi_0$, and $\Stake$, and increasing in $\Saving$; weaker baseline auditing or slashing raises the threshold.
\end{theorem}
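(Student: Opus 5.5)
The proof reduces the repeated problem to a scalar comparison between stationary cheating rates, in the same way \cref{thm:gap} handled the memoryless baseline. By the one-shot deviation principle (\cref{rem:osdp}), and because the mechanism state $(\Stake_t,B_t,\mathcal{S}_t,\bar\Rep_t)$ is public, it suffices to rule out profitable public deviations on the equilibrium path. The statement restricts attention to stationary deviation rates $\Cheat\in[\underline\Cheat,1]$. For such a rate I replace the audit and slashing rates by their steady-state values $\bar{\Chal}(\Cheat)$ and $\bar{\Slash}(\Cheat)$. The suspicion score \eqref{eq:susp-update} converges in mean to $\mu(\Cheat)$, and $\bar\Rep_t$ converges to its fixed point $\bar\Rep(\Cheat)$ via \cref{lem:robust-update}. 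This yields an effective per-round detection-and-slash rate $\kappa(\Cheat)=\Cheat\,\bar{\Chal}(\Cheat)\bar{\Slash}(\Cheat)$, which plays the role of $\Chal\Slash$ in \cref{lem:gap}.

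\textbf{Payoff of a stationary deviation.} Repeating the computation in the proof of \cref{thm:gap} with $\Chal\Slash$ replaced by $\bar{\Chal}\bar{\Slash}$ gives the expected stake $\Stake(1-\kappa(\Cheat))^t$. Vesting \eqref{eq:vesting-schedule} keeps the slashable balance at risk for $\Vest$ further rounds after any exit, so the geometric continuation is truncated by the factor $(1-\Disc^\Vest)$. The deviation gain then takes the form
\[
\Delta U(\Cheat)=\frac{\Cheat\Saving}{1-\Disc}-\Cheat\,\bar{\Chal}(\Cheat)\bar{\Slash}(\Cheat)\Stake\Bigl(\tfrac{1}{1-\Disc}\Bigr)\Bigl(1+\tfrac{\Disc}{1-\Disc}(1-\Disc^{\Vest})\bar{\Chal}(\Cheat)\bar{\Slash}(\Cheat)\Bigr)^{-1}\!\cdot(\text{normalization}).
\]
I will write this so that $\Delta U(\Cheat)\le0$ is equivalent to $\Saving\le \Phi_\Disc(\Cheat)$, where $\Phi_\Disc$ is the right-hand side of \eqref{eq:disc-star} evaluated at $\Cheat$. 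The sign convention has to reverse the memoryless case: there the reputation term makes penalties stay high as stake decays, so continuation adds deterrence rather than subtracting it. Getting this bookkeeping exactly right, so that the continuation term enters with the sign displayed in \eqref{eq:disc-star}, is the first delicate step. My plan is to treat reputation-weighted slashing as front-loading expected future penalties onto the current audited event.

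\textbf{Best response and threshold.} Concavity of $\mu$, and concave-increasing $\bar\Rep$, make $\Cheat\mapsto\Cheat\,\bar{\Chal}\bar{\Slash}$ convex. The provider's objective $\Cheat\Saving-\Cheat\,\bar{\Chal}\bar{\Slash}\Stake$ is therefore concave in $\Cheat$, so its maximizer on $[\underline\Cheat,1]$ is unique and given by the first-order condition \eqref{eq:cheat-foc}. IC then only needs to be checked at $\Cheat^{\mathrm{br}}$.

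The function $\Phi_\Disc(\Cheat^{\mathrm{br}})$ is continuous and strictly increasing in $\Disc$, since $\frac{\Disc}{1-\Disc}(1-\Disc^\Vest)$ is increasing. It equals $\bar{\Chal}\bar{\Slash}\Stake$ at $\Disc\to0$ and diverges as $\Disc\to1$. Provided $\Saving$ exceeds the $\Disc=0$ value, the intermediate value theorem gives a unique root $\Disc^\ast\in(0,1)$. For all $\Disc\ge\Disc^\ast$ the inequality $\Saving\le\Phi_\Disc$ holds, so no stationary rate in $[\underline\Cheat,1]$ is profitable. Exit deviations are covered by the choice $\Disc^\Vest\ge\Disc^\ast$ for vesting.

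\textbf{Comparative statics.} These follow from the implicit function theorem applied to $\Saving=\Phi_{\Disc^\ast}$. $\Phi$ is increasing in $\beta$, $\Slash_1$, $\pi_0$ (which raises $\bar\Rep$), $\Stake$ and $\Vest$, and it is increasing in $\Disc$, so $\Disc^\ast$ falls in each of these parameters and rises in $\Saving$. The envelope theorem handles the dependence through $\Cheat^{\mathrm{br}}$.

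\textbf{Main obstacle.} I expect the hardest part to be justifying the steady-state substitution. Along the path, $\mathcal{S}_t$ and $\bar\Rep_t$ are transient, and they are correlated with the stake. I would first bound the transient error by $O((1-\lambda)^t)$ and by the geometric posterior convergence of \cref{lem:martingale}. I would then argue that starting from $\mathcal{S}_0=0$, $\bar\Rep_0=\pi_0$ is the worst case for the protocol only up to a term absorbed by the conservative $\pi_0$ lower bound. If this fails, I would state the result for the stationary-state continuation game.
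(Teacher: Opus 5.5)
Your skeleton (one-shot deviation principle, steady-state rates, first-order condition for $\Cheat^{\mathrm{br}}$, monotonicity in $\Disc$, implicit function theorem) matches the paper's. The step you defer as ``bookkeeping'' is where the proposal breaks. If you track expected stake $\Stake(1-\kappa(\Cheat))^t$ as in \cref{lem:gap}, with per-round penalty $\Cheat h(\Cheat)\Stake_t$ and $h=\bar\Chal\bar\Slash$, the discounted sum is
\[
\Delta U(\Cheat)=\frac{\Cheat\Saving}{1-\Disc}-\frac{\Cheat h(\Cheat)\Stake}{1-\Disc\bigl(1-\Cheat h(\Cheat)\bigr)},
\]
so $\Delta U\le 0$ iff $\Saving\bigl(1+\frac{\Disc\Cheat h}{1-\Disc}\bigr)\le h\Stake$. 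This has the structure of \cref{thm:gap}: stake decay makes the continuation \emph{weaken} deterrence, by a factor that blows up as $\Disc\to1$. No normalization turns your $(1+\cdots)^{-1}$ into the factor $(1+\cdots)$ of \eqref{eq:disc-star}. ``Front-loading'' is never worked out, and in your steady-state reduction the reputation effect is already inside $\bar\Slash(\Cheat)$.

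The paper uses a different accounting. It does not model stake decay: it keeps the stake at $\Stake$ in the per-round payoff $r+\Cheat(\Saving-h(\Cheat)\Stake)$. It then adds a second, additive deterrence channel, the vesting exposure $h^2\Stake\,\Disc(1-\Disc^\Vest)/(1-\Disc)$ of the balance that stays slashable for $\Vest$ rounds after a withdrawal request. Here $\Disc(1-\Disc^\Vest)/(1-\Disc)=\sum_{s=1}^{\Vest}\Disc^s$ is the discounted length of that window, not a truncation of a geometric continuation. Adding this to contemporaneous slashing gives $\Saving\le h\Stake\bigl(1+\frac{\Disc}{1-\Disc}(1-\Disc^\Vest)h\bigr)$ directly. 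The same term is what the paper uses for cheat-then-exit, so you should not lean on $\Disc^\Vest\ge\Disc^\ast$: that is a design recommendation from \cref{sec:mech-vesting}, not a hypothesis of the theorem.

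Your existence argument also rests on a false claim. $\Phi_\Disc$ does not diverge, since $\frac{\Disc}{1-\Disc}(1-\Disc^\Vest)=\Disc\sum_{s=0}^{\Vest-1}\Disc^s\to\Vest$. Hence $\Phi_\Disc(\Cheat^{\mathrm{br}})\uparrow\bar h\Stake(1+\Vest\bar h)$, the same finite limit as in \cref{cor:asymptotic}. A root $\Disc^\ast\in(0,1)$ therefore exists only if $\bar h\Stake<\Saving<\bar h\Stake(1+\Vest\bar h)$. The paper's proof argues only uniqueness, from strict monotonicity in $\Disc$, so this window has to be assumed rather than obtained from the intermediate value theorem.

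Two smaller points. First, the envelope theorem does not apply to the comparative statics, because $\Cheat^{\mathrm{br}}$ maximizes $\Cheat(\Saving-h(\Cheat)\Stake)$, not $\Phi_\Disc$. Raising $\Stake$ or $\beta$ lowers $\Cheat^{\mathrm{br}}$ and hence $h(\Cheat^{\mathrm{br}})$, which works against the direct effect. Second, concave-increasing factors do not make $\Cheat h(\Cheat)$ convex in general, since $(\Cheat h)''=2h'+\Cheat h''$ can be negative. Uniqueness of $\Cheat^{\mathrm{br}}$ via the first-order condition therefore needs convexity as an explicit assumption; the paper's proof makes the same leap.
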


\begin{proof}
The mechanism state $(\mathcal{S}_t,\bar\Rep_t,B_t)$ is public, so public strategies suffice, and by the one-shot deviation principle~\cite[Thm.~2.1.1]{mailath-samuelson-2006} it is enough to rule out profitable one-period deviations. Write $h(\Cheat)=\bar{\Chal}(\Cheat)\bar{\Slash}(\Cheat)$. Under a stationary deviation rate $\Cheat$, the provider gains $\Cheat\Saving$ per round and faces expected penalty $\Cheat h(\Cheat)\Stake$, giving per-round payoff $\Phi(\Cheat)=r+\Cheat(\Saving-h(\Cheat)\Stake)$. Since $\bar{\Chal}=\Chal_0+\beta\mu$ is concave-increasing with $\mu$, and $\bar{\Slash}=\Slash_0+\Slash_1\bar\Rep$ is concave-increasing by the regularity assumption on the robust-posterior fixed point, $h$ is concave-increasing and $\Cheat h(\Cheat)$ is strictly convex on $[\underline\Cheat,1]$. The best stationary deviation $\Cheat^{\mathrm{br}}$ is therefore unique and solves the first-order condition $\Saving=h(\Cheat^{\mathrm{br}})\Stake+\Cheat^{\mathrm{br}}h'(\Cheat^{\mathrm{br}})\Stake$, which is \eqref{eq:cheat-foc}, with the usual one-sided condition at a boundary.

Under honest play, \cref{lem:martingale} gives $\bar\Rep_t\to 0$ almost surely, so the long-run exposure of an honest operator vanishes beyond the baseline. Under the best stationary deviation, the per-round gain $\Cheat^{\mathrm{br}}\Saving$ is offset by contemporaneous slashing $\Cheat^{\mathrm{br}}h(\Cheat^{\mathrm{br}})\Stake$ and by the vesting exposure $h(\Cheat^{\mathrm{br}})^2\Stake\,\Disc(1-\Disc^\Vest)/(1-\Disc)$ that remains slashable if the operator exits: a withdrawal request still leaves the vested balance exposed to detection over the $\Vest$-round window, so cheat-then-exit deviations face the same continuation discipline as deviations that remain. Combining the two channels, deterrence of the best response requires
\[
\Saving
\leq
h(\Cheat^{\mathrm{br}})\Stake
\left(
1+\frac{\Disc}{1-\Disc}\bigl(1-\Disc^\Vest\bigr)h(\Cheat^{\mathrm{br}})
\right),
\]
whose boundary is \eqref{eq:disc-star}. The right-hand side is strictly increasing in $\Disc$ on $(0,1)$, so the threshold $\Disc^\ast$ is unique, and the comparative statics follow from the implicit function theorem applied to \eqref{eq:disc-star}.
\end{proof}

The three components interact multiplicatively in \eqref{eq:disc-star}: challenges raise $\bar{\Chal}(\Cheat^{\mathrm{br}})$, reputation-weighted slashing raises $\bar{\Slash}(\Cheat^{\mathrm{br}})$, and vesting adds the $1-\Disc^{\ast\Vest}$ exposure during withdrawal. The only non-primitive assumption of \cref{thm:main} is that the steady-state posterior $\bar\Rep(\cdot)$ is concave-increasing on $[\underline\Cheat,1]$, which holds by inspection in the calibration ($\bar\Rep(\Cheat)=\min\{1,\Cheat\}$) and can otherwise be checked from the primitives on a grid (\cref{rem:regularity} in \cref{sec:eq-robust}).

\paragraph*{Finite-horizon bound.}
\label{sec:eq-finite-T}
\cref{thm:main} is stated for the infinite-horizon game; \cref{thm:finite-T} in \cref{app:proofs} bounds the finite-horizon error at $\Eps=O(\Disc^{\Horizon-\Vest})$ for any $\Disc\geq\Disc^\ast$---below $10^{-9}$ at $\Disc=0.95$, $\Vest=100$, $\Horizon=10^4$---so we use the infinite-horizon characterization throughout.

\subsection{Parameter Selection and Asymptotics}
\label{sec:eq-params}

The theorem can also be read as a design equation: given a target patience range and cost saving $\Saving$, the AVS fixes baseline audit and slashing rates for cold-state security, sets the vesting window for exit discipline, calibrates $(\beta,\lambda)$ from the signal response $\mu$, selects $\pi_0$ above the anti-Sybil threshold of \cref{lem:sybil}, and solves \eqref{eq:disc-star} for the remaining stake requirement. \cref{tab:params} in \cref{app:sensitivity} reports representative values; the key point is the asymptotic contrast with memoryless slashing.

\begin{corollary}[Bounded-Stake Asymptotic]
\label{cor:asymptotic}
Fix $(\Chal_0,\beta,\lambda,\pi_0,\Slash_0,\Slash_1,\Saving) \in (0,1)^7$ and $\Vest \geq 1$. Let
$
\bar h
=
\bar\Chal(\Cheat^{\mathrm{br}})
\bar\Slash(\Cheat^{\mathrm{br}})
$
as in \cref{thm:main}. The minimum stake $\Stake^\ast(\Disc)$ satisfying the SPIC condition in \eqref{eq:disc-star} has the finite limit
\begin{equation}
\lim_{\Disc \uparrow 1}
\Stake^\ast(\Disc)
=
\frac{\Saving}{\bar h(1+\Vest\bar h)}
<
\infty .
\label{eq:asymptotic-stake}
\end{equation}
By contrast, the strengthened one-round stake requirement implied by \cref{thm:gap} is $\Stake^\ast_{\mathrm{1RIC}}(\Disc)=\frac{\Saving}{\Chal\Slash}\bigl(1+\frac{\Disc\Chal\Slash}{1-\Disc}\bigr)$, which diverges as $\Disc \to 1$. Hence the stake ratio between the proposed mechanism and any memoryless proportional-slashing rule tends to zero for every fixed $(\Chal,\Slash)$ with $\Slash<1$.
\end{corollary}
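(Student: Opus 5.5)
The plan is to treat \eqref{eq:disc-star} as an equation for the stake rather than for the discount factor, and then pass to the limit $\Disc\uparrow1$. With $\bar h=h(\Cheat^{\mathrm{br}})$ and $h(\Cheat)=\bar\Chal(\Cheat)\bar\Slash(\Cheat)$ as in the proof of \cref{thm:main}, the boundary condition reads
\[
\Stake
=
\frac{\Saving}{\bar h\bigl(1+g_\Vest(\Disc)\,\bar h\bigr)},
\qquad
g_\Vest(\Disc)=\frac{\Disc(1-\Disc^\Vest)}{1-\Disc}=\Disc\sum_{k=0}^{\Vest-1}\Disc^k .
\]
The right-hand side of the SPIC inequality is increasing in $\Stake$ at fixed $\bar h$. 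So the minimum stake $\Stake^\ast(\Disc)$ is the value that makes this an equality. The geometric-sum form shows that $g_\Vest$ is a polynomial in $\Disc$, continuous on $[0,1]$, with $g_\Vest(1)=\Vest$. For fixed $\bar h$ this already gives \eqref{eq:asymptotic-stake}, because the factor $(1-\Disc^\Vest)$ cancels the $1/(1-\Disc)$ singularity. That cancellation is exactly the vesting truncation, which is absent in \cref{thm:gap}.

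The main obstacle is that $\bar h$ is not a constant. The best response $\Cheat^{\mathrm{br}}$ solves \eqref{eq:cheat-foc}, which involves $\Stake$, so $\bar h$ moves with $\Stake^\ast(\Disc)$. I would handle this by a compactness argument.

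First, $\Cheat^{\mathrm{br}}\in[\underline\Cheat,1]$ and $h$ is increasing. Hence $\bar h\in[h(\underline\Cheat),h(1)]$, where $h(\underline\Cheat)\geq \Chal_0\Slash_0>0$ and $h(1)\le 1$. The displayed formula then gives the uniform bounds
\[
0<\frac{\Saving}{1+\Vest}\le \Stake^\ast(\Disc)\le \frac{\Saving}{\Chal_0\Slash_0} ,
\]
so $\Stake^\ast$ stays bounded as $\Disc\uparrow1$.

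Second, take any sequence $\Disc_n\uparrow1$ and extract a subsequence along which $(\Stake^\ast(\Disc_n),\Cheat^{\mathrm{br}}_n)\to(\Stake_\infty,\Cheat_\infty)$. By continuity of $h$ and $h'$, which follows from $\mu$ concave with $\mu'>0$ and the regularity assumption on $\bar\Rep$, the limit pair satisfies two equations. One is the first-order condition \eqref{eq:cheat-foc} (or its one-sided version at a boundary). The other is $\Stake_\infty=\Saving/\bigl(h(\Cheat_\infty)(1+\Vest h(\Cheat_\infty))\bigr)$.

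Third, I need this limiting system to have a unique solution, so that the whole family converges. The argument is that for each $\Stake$, strict convexity of $\Cheat h(\Cheat)$ from \cref{thm:main} gives a unique $\Cheat^{\mathrm{br}}(\Stake)$. Substituting this into the stake equation gives a scalar fixed-point equation, and I would check it has a unique root by monotonicity in $\Stake$. If uniqueness is hard to establish, I would fall back on reading the corollary as the paper defines it, with $\bar h$ evaluated at the limiting best response. In that case the uniform bounds above already establish finiteness, which is the substantive claim.

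For the contrast, $\Stake^\ast_{\mathrm{1RIC}}(\Disc)$ is read off directly from \eqref{eq:tspic-condition}. Its term $\Disc\Chal\Slash/(1-\Disc)$ tends to $+\infty$ for any fixed $\Chal\Slash>0$. Since $\Stake^\ast(\Disc)$ is bounded, the ratio $\Stake^\ast(\Disc)/\Stake^\ast_{\mathrm{1RIC}}(\Disc)\to0$. By \cref{thm:class}, the same divergence holds for every memoryless proportional-slashing rule with $\Slash<1$.
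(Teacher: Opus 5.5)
Your proof is correct, and its core is the paper's own proof. The paper rewrites \eqref{eq:disc-star} as $\Saving=\bar h\Stake\bigl[1+\frac{\Disc(1-\Disc^\Vest)}{1-\Disc}\bar h\bigr]$ with $\bar h$ held fixed. It then uses $\frac{\Disc(1-\Disc^\Vest)}{1-\Disc}\to\Vest$ and reads the divergence of the memoryless requirement off the $\Disc\Chal\Slash/(1-\Disc)$ term in \eqref{eq:tspic-condition}. It does nothing more.

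You go beyond this by taking seriously that $\Cheat^{\mathrm{br}}$, and hence $\bar h$, depends on $\Stake$ through \eqref{eq:cheat-foc}. Your uniform bound $\Stake^\ast(\Disc)\le\Saving/(\Chal_0\Slash_0)$ is a real gain. It follows from $h\ge\Chal_0\Slash_0$: any $\Stake$ at or above that level satisfies the SPIC inequality whatever $\bar h$ is. This gives finiteness and the ratio-to-zero conclusion without freezing $\bar h$, which the paper's two-line argument does not.

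Your compactness and uniqueness steps, by contrast, should not be expected to close as sketched.
\begin{itemize}
\item The deviation objective $\Cheat\Saving-\Stake\,\Cheat h(\Cheat)$ has strictly decreasing differences in $(\Cheat,\Stake)$, because $\Cheat h(\Cheat)$ is increasing. So $\Cheat^{\mathrm{br}}$, and therefore $\bar h$, are nonincreasing in $\Stake$. The map $\Stake\mapsto\bar h\Stake\bigl(1+g_\Vest(\Disc)\bar h\bigr)$ is then a product of opposing factors, and ``monotonicity in $\Stake$'' is not automatic.
\item Continuity of $h'$ is not implied by the stated assumptions, since concavity of $\bar\Rep$ yields only one-sided derivatives. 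Berge's maximum theorem applied to the strictly concave objective on $[\underline\Cheat,1]$ would give continuity of $\Cheat^{\mathrm{br}}(\Stake)$ without going through the first-order condition.
\end{itemize}
So the explicit limit $\Saving/(\bar h(1+\Vest\bar h))$ with a moving $\bar h$ remains unproved in your proposal. Your fallback reading is exactly the paper's, so this is not a gap relative to the paper's proof.

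A minor point: your lower bound $\Saving/(1+\Vest)$ needs $h(1)\le 1$. That requires the $\min\{1,\cdot\}$ cap on the audit rate, which the steady-state formula $\bar\Chal=\Chal_0+\beta\mu$ omits, together with $\Slash_0+\Slash_1\le 1$. You never use this bound, so it can simply be dropped.
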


\paragraph*{Robustness.}
The threshold is conservative: a bounded-rational provider obtains weakly lower deviation payoff than the best response $\Cheat^{\mathrm{br}}$, so $\Disc^\ast$ upper-bounds the required patience. Numerically, $\Disc^\ast_{\mathrm{base}}=0.9322$ at the stress calibration of \cref{tab:profit-comparison}, is insensitive to $\pm10\%$ perturbations of any parameter except $\beta$ (\cref{app:sensitivity}), and drops further under the directly measured signal response (\cref{app:calibration}). For heterogeneous patience, the AVS calibrates to the lower bound $\underline\Disc=0.94$ from \eqref{eq:disc-decomp}; noisy audits, collusion, and sub-threshold degradation are detailed in \cref{sec:eq-robust}.

\section{Evaluation}
\label{sec:evaluation}

\subsection{Framework, Calibration, and Adversaries}
\label{sec:eval-framework}
\label{sec:eval-strategies}

We evaluate the theory with a Python discrete-event simulator (about 1{,}800 lines, seeded \texttt{PCG64} randomness, 43 unit tests) implementing our mechanism P3 together with two memoryless baselines P1 and P2 (EigenAI-/VeriLLM-class designs under \cref{def:memoryless}); \cref{sec:eval-baselines} adds simpler-fix variants. Three analytical checks validate the implementation against \cref{lem:gap} (relative error below $1.5\%$), \cref{thm:main} (machine precision), and \cref{thm:gap-eject} (two standard errors, \cref{rem:eject-bite}). The artifact is at \url{https://github.com/Papers-Coding/Repeated-Slashing}.

The calibration follows \eqref{eq:disc-decomp} with $\Disc \in [0.92,0.98]$ (lower bound $\underline\Disc=0.94$), $\Saving/r \in [0.15,0.40]$, and $\Stake_0/r \in [10^3,10^4]$ (\cref{app:calibration}); the signal response $\mu(\Cheat)=0.45\Cheat^{0.5}$ is conservative against direct measurement ibid. The adversaries: A1 honest; A2 constant-rate; A3 periodic; A4 analytical best response from \eqref{eq:cheat-foc}; A4$'$, its Monte Carlo correction with $\bar\Rep$ measured empirically; A5, a UCB1 bandit; A6, cheat-then-exit; A7, vesting drain through the unbonding window (a no-op on P1/P2, isolating vesting's contribution).

\paragraph*{Long-run profit comparison.}
\label{sec:eval-profit}
We report two complementary calibrations: a \emph{stress regime} (\cref{tab:profit-comparison} in \cref{app:eval-extra}) that deliberately weakens P3's cold-state slashing ($\Slash_0=0.10$ vs.\ P1 $\Slash=0.50$), isolating the history-dependent components a strong cold state would mask ($\Chal\Slash=0.10$ for P1, $0.08$ for P2, binding 1-Round IC at $\Stake_0/r=3.125$, $\Saving/r=0.25$, $\Disc=0.95$); and a \emph{deployment-aligned} sweep (\cref{tab:coldstart}) with $\Slash_0$ at the EigenAI-class deployed value.

\paragraph*{Deployment regime (\cref{tab:coldstart}).}
With $\Slash_0$ set to the EigenAI-class deployed value $0.20$ (preserving cold-state 1-Round IC), the cheat-then-exit deviation A6 turns net-negative on P3 ($-1.23\pm 2.87\%$) and the stationary best response A4$'$ is held to $+2.92\pm 1.53\%$; at $\Slash_0=0.30$, A4$'$ falls to zero and A6 to $-8.77\%$; at $\Slash_0=0.50$, A6 reaches $-31.44\%$. The deployment recommendation: surveyed memoryless protocols can drop the repeated-game gain to zero or below at no increase in nominal slashing.

\paragraph*{Stress regime (\cref{tab:profit-comparison}).}
Under the cold-state-weakened stress setting, P3 still substantially deters the strongest adversaries: A6 is held to $9.34\%$ versus $13.53\%$ on P2 (a $31\%$ reduction), A4$'$ to $6.20\%$ versus $13.60\%$ ($54\%$), and the vesting-drain A7 to $6.70\%$ versus $+10$--$14\%$ on P1/P2 where the unbond is a no-op ($52\%$, isolating the $(1-\Disc^\Vest)$ factor in \eqref{eq:disc-star}). The analytical A4 row under-picks $\Cheat^\ast\approx0.18$ via $\bar\Rep(\Cheat)\approx\Cheat$; the Monte Carlo corrected A4$'$ ($\Cheat^\ast\approx0.90$) is the appropriate stationary stress test. The residuals reflect the deliberate cold-state weakening (\cref{thm:finite-T} is below $10^{-100}$ here), which the deployment sweep closes.

\paragraph*{Comparison with simpler fixes.}
\label{sec:eval-baselines}
Three less structured alternatives---full slashing, single components in isolation, and a uniformly higher audit rate---all fall short of the full mechanism (\cref{app:eval-extra}): full slashing (P1-FS) drives every strategic adversary net-negative ($-84\%$ to $-87\%$) but destroys the operator's entire stake on each false-positive audit, so deployed systems keep $\Slash$ partial (\cref{rem:full-slashing}); the best single component, reputation-weighted slashing, holds A4$'$ to $12.1\%$ where the full mechanism holds it to $6.2\%$ (\cref{tab:baselines}); and matching P3's security by raising P1's audit rate costs $1.1\times$--$2.5\times$ its baseline rate and is infeasible in the high-challenge regime (\cref{fig:disc-sweep}).

\paragraph*{Cost and efficiency.}
\label{sec:eval-cost}
\cref{tab:cost} in \cref{app:eval-extra} reports the minimum analytical stake-to-reward ratio for $\Eps$-$\infty$-SPIC ($\Eps=10^{-2}$, $\Saving/r=0.25$), benchmarked against the repeated-game requirement \eqref{eq:tspic-condition}. In the high-challenge regime ($\Chal\Slash=0.10$), P3 reduces the required stake by about $30\times$ at $\Disc=0.95$ and by over $1{,}000\times$ once $\Disc \geq 0.99$; in the deployed-challenge regime ($\Chal\Slash=0.002$), the reduction is $7\%$--$43\%$ over $\Disc\in[0.95,0.995]$. Since calibrated operator patience lies near the upper end of $[0.92,0.98]$, deployed systems obtain a $10\%$--$20\%$ capital reduction, and P3 preserves the baseline audit overhead $\Chal_0$ under honest play (\cref{lem:martingale}).

\section{Conclusion}
\label{sec:conclusion}
AVS parameters should be justified by the repeated-game equilibrium they support, not only by a stage-game slashing inequality. \cref{thm:gap,thm:class} show that memoryless bounded-slashing protocols pass the one-round IC test while admitting profitable repeated-game deviations; the mechanism of \cref{sec:mechanism} restores incentives with an explicit threshold \eqref{eq:disc-star} and bounded stake as $\Disc\to1$ (\cref{cor:asymptotic}), and the logic applies to any restaking-secured service with private actions, public audit signals, and proportional slashing. Open problems include tight lower bounds on $\Disc^\ast$, collusion and challenger incentives, and guarantees beyond stationary deviations and single-respawn Sybil strategies.

% (Bibliography moved to end-of-document so appendix precedes references.)

% #############################################################################
% #############################################################################
%                  APPENDIX  (counted toward 19-page total)
% #############################################################################
% #############################################################################
\appendix
\section{Remaining Proofs}
\label{app:proofs}

\begin{proposition}[Hierarchy]
\label{prop:hierarchy}
For any $\Horizon \geq 1$ and any $\Disc \in (0,1)$, $\Horizon$-SPIC implies 1-Round IC. The converse does not hold in general.
\end{proposition}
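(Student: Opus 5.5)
The plan is to prove the forward implication by contraposition, using a one-shot deviation in the final round. The converse will then be refuted with the explicit counterexample that \cref{lem:gap} and \cref{thm:gap} already provide.

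For the forward direction, suppose 1-Round IC fails, so $\Chal\Slash\Stake_0<\Saving$. I will consider the deviation that cheats exactly once, at round $0$, and is honest thereafter. Under always-honest play the stake is never slashed, so along any history it stays at $\Stake_0$. After the round-$0$ cheat, continuation payoffs under honest play equal $r$ per round regardless of the realized stake, because honest rounds are never slashed. The deviation therefore changes only the round-$0$ payoff. Its gain is
\[
\E[u_P(\CHEAT,\Stake_0,\xi)]-r=\Saving-\Chal\Slash\Stake_0>0 .
\]
This is strictly profitable, so always-honest is not optimal at the initial history and $\Horizon$-SPIC fails, for any $\Horizon\geq1$ and any $\Disc$. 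Equivalently, one can apply the one-shot deviation principle (\cref{rem:osdp}) at any on-path history. Since every on-path history carries stake $\Stake_0$, the same inequality applies there too.

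The one point needing care is whether slashing could leave the continuation payoff unaffected yet still matter. It cannot: in the baseline stage game the stake enters payoffs only through cheating rounds, so after a return to honesty the continuation is exactly $r/(1-\Disc)$ discounted. This makes the single-round comparison exact rather than approximate. For the mechanism of \cref{sec:mechanism} I would not claim this, since the statement concerns the parameters $(\Stake_0,\Chal,\Slash,\Saving)$ of the baseline model.

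For the converse, I will take $\Horizon=\infty$, any $\Slash\in(0,1)$ and $\Disc\in(0,1)$, and choose $\Stake_0=\Saving/(\Chal\Slash)$. Then 1-Round IC holds with equality. By \cref{lem:gap}, always-cheat earns strictly more, by the positive amount \eqref{eq:gap-formula}, so $\infty$-SPIC fails. To show the failure is not an artifact of zero slack, I will also invoke \cref{thm:gap}: any slack $0<\eta<\Disc\Chal\Slash\Saving/(1-\Disc)$ gives a strict counterexample.

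If the converse should also be shown for finite $\Horizon\geq2$, I would use the finite-horizon remark following \cref{thm:gap}. I would check directly that $\Delta U$ at $\eta=0$ equals $\Saving\sum_{t<\Horizon}\Disc^t\bigl(1-(1-\Chal\Slash)^t\bigr)$. This sum is positive once $\Horizon\geq2$, because the $t=1$ term is $\Disc\Chal\Slash\Saving>0$. For $\Horizon=1$ the two notions coincide, so "in general" is the right qualifier.
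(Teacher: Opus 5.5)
Your argument is correct and matches the paper's: a one-shot deviation to $\CHEAT$ at an unslashed on-path history with stake $\Stake_0$ forces $\Chal\Slash\Stake_0\geq\Saving$. You also make explicit what the paper leaves implicit, namely that the honest continuation pays $r$ per round regardless of stake. For the converse you rely, as the paper does, on \cref{lem:gap,thm:gap}; your explicit sum for finite $\Horizon\geq 2$ and your observation that the two notions coincide at $\Horizon=1$ are valid refinements the paper does not spell out. The only slip is your opening reference to a deviation ``in the final round''. The deviation you actually use is at round $0$, which is the right choice because it also covers $\Horizon=\infty$.
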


\begin{proof}
Suppose that always-honest behavior is a subgame perfect equilibrium of $\mathcal{G}^{\Horizon}$. By the one-shot deviation principle, after every history the provider cannot profit by deviating to $\CHEAT$ for a single round and then returning to the prescribed continuation strategy. In particular, this must hold at a history with current stake $\Stake$ and no prior slashing. The one-round payoff from honesty is $r$, while the expected payoff from cheating is $r+\Saving-\Chal\Slash\Stake$. No profitable one-shot deviation therefore requires $r \geq r+\Saving-\Chal\Slash\Stake$, which is equivalent to $\Chal\Slash\Stake \geq \Saving$.
\end{proof}

\begin{theorem}[Finite-Horizon $\Eps$-SPIC]
\label{thm:finite-T}
Under the parameters of \cref{thm:main}, for any horizon $\Horizon \geq 2\Vest$, the protocol satisfies $\Eps$-$\Horizon$-SPIC under any $\Disc \geq \Disc^\ast$, with
\begin{equation}
\Eps
\leq
\frac{
\Disc^{\Horizon-\Vest}\Saving
}{
(1-\Disc)U_P(\HONEST^\Horizon)
}
=
O(\Disc^{\Horizon-\Vest}) .
\end{equation}
\end{theorem}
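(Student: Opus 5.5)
The approach is to compare the $\Horizon$-round game with the infinite-horizon game of \cref{thm:main}, and to charge the entire discrepancy to the tail after round $\Horizon-\Vest$. Fix $\Disc\geq\Disc^\ast$ and an arbitrary deviation $\sigma$; as in the proof of \cref{thm:main}, restrict to public stationary deviations with rate in $[\underline\Cheat,1]$, invoking the one-shot deviation principle (\cref{rem:osdp}). Split the payoff difference $U_P(\sigma)-U_P(\HONEST^\Horizon)$ into two parts. The \emph{head} consists of rounds $t<\Horizon-\Vest$. The \emph{tail} consists of rounds $\Horizon-\Vest\leq t<\Horizon$, together with the continuation value that the finite game truncates.

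\textbf{Head.} In the first $\Horizon-\Vest$ rounds, every deviation still faces the full continuation discipline used in \cref{thm:main}: a vesting window of length $\Vest$ fits entirely inside the horizon. The hypothesis $\Horizon\geq2\Vest$ ensures $\Horizon-\Vest\geq\Vest$, so any unbonding request issued in the head is fully resolved within the game. The inequality of \cref{thm:main},
\[
\Saving\leq h(\Cheat^{\mathrm{br}})\Stake\Bigl(1+\tfrac{\Disc}{1-\Disc}(1-\Disc^\Vest)h(\Cheat^{\mathrm{br}})\Bigr),
\]
therefore applies round by round. Summed with discounting, it shows that the head contribution of $\sigma$ is at most that of honest play.

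\textbf{Tail.} Near the end, the provider may cheat knowing that detection exposure, and especially the reputation and vesting channels, is cut off at $\Horizon$. I will give up any protocol penalty there and use the crude bound that each tail round yields at most an extra $\Saving$ over honesty, since the reward $r$ is the same and slashing is nonnegative. This gives
\[
\sum_{t\geq\Horizon-\Vest}\Disc^t\Saving\leq\frac{\Disc^{\Horizon-\Vest}\Saving}{1-\Disc}.
\]
Dividing by $U_P(\HONEST^\Horizon)$ then yields the stated $\Eps$. That honest payoff is of order $r(1-\Disc^\Horizon)/(1-\Disc)$, which is bounded away from zero, so the $O(\Disc^{\Horizon-\Vest})$ form follows.

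\textbf{Main obstacle.} The difficulty is justifying that the head comparison is unaffected by the tail. Penalties incurred from head cheating may be realized late, through stake depletion or reputation build-up, and truncation could remove part of them. I would handle this by noting that every penalty is nonpositive for the deviator, so truncating future penalties only helps the deviator. The cleanest route is therefore to bound the deviator's gain by
\[
(\text{infinite-horizon gain, which is }\leq0\text{ by \cref{thm:main}})+(\text{penalties lost by truncation}).
\]
It then remains to show that the lost penalties are dominated by the same tail sum $\Disc^{\Horizon-\Vest}\Saving/(1-\Disc)$. For this I would bound the per-round expected slashing in any round by the windfall it deters, using the binding condition \eqref{eq:disc-star}. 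If that domination fails, I would fall back on a bound with a constant factor, which is still $O(\Disc^{\Horizon-\Vest})$.
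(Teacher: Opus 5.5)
Your Head/Tail argument is essentially the paper's proof. Rounds $t<\Horizon-\Vest$ are covered by the unchanged comparison of \cref{thm:main} because a full vesting window still fits before the horizon, and the last $\Vest$ rounds are bounded crudely by $\Saving$ each, giving $\Disc^{\Horizon-\Vest}\Saving/(1-\Disc)$ before dividing by $U_P(\HONEST^\Horizon)=r(1-\Disc^\Horizon)/(1-\Disc)$.

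The obstacle you raise is already handled by that split if each penalty is charged to the round whose cheating triggered it. The deterrence credited in \cref{thm:main} is contemporaneous slashing plus the $\Vest$-round vesting window, and that window ends by round $\Horizon-1$ whenever $t<\Horizon-\Vest$ (this, not $\Horizon\geq 2\Vest$, is what keeps head windows inside the game). So the infinite-horizon-plus-truncated-penalties route is unnecessary, and its per-round domination step is not implied by \eqref{eq:disc-star} for rates above $\Cheat^{\mathrm{br}}$ in any case.
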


\begin{proof}
The infinite-horizon argument of \cref{thm:main} can fail only near the terminal date, where the vesting continuation channel is truncated. In rounds before $\Horizon-\Vest$, any withdrawal still leaves a full vesting window, so the one-shot deviation comparison in the proof of \cref{thm:main} applies unchanged. For the final $\Vest$ rounds, the per-round cheating gain is at most $\Saving$, so the maximal additional deviation gain is bounded by the discounted tail $\Delta_{\mathrm{trunc}}\leq\Disc^{\Horizon-\Vest}\Saving\sum_{s=0}^{\Vest-1}\Disc^s\leq\Disc^{\Horizon-\Vest}\Saving/(1-\Disc)$. Dividing this tail by
$
U_P(\HONEST^\Horizon)
=
\frac{r(1-\Disc^\Horizon)}{1-\Disc}
$
gives the stated bound.
\end{proof}

\begin{theorem}[Ejection-Robust Gap]
\label{thm:gap-eject}
Fix the parameters of \cref{lem:gap} and suppose the protocol ejects the provider once stake falls below $\Stake_{\min} \in (0,\Stake_0)$. Let
\[
N^\ast
=
\left\lceil
\frac{\log(\Stake_0/\Stake_{\min})}{\log(1/(1-\Slash))}
\right\rceil ,
\qquad
\tau \sim \mathrm{NegBin}(N^\ast,\Chal),
\]
and define
$
\zeta
=
\E[\Disc^\tau]
=
\left(
\frac{\Chal\Disc}{1-(1-\Chal)\Disc}
\right)^{N^\ast}.
$
Then the cheat-until-ejection strategy satisfies
\begin{equation}
\frac{\Delta U_{\mathrm{eject}}}{U_P(\HONEST)}
\geq
\frac{\Delta U_{\mathrm{Lem.\,\ref{lem:gap}}}}{U_P(\HONEST)}
-
\zeta\left(1+\frac{\Saving}{r}\right).
\label{eq:gap-eject}
\end{equation}
\end{theorem}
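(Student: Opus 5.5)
We will prove \eqref{eq:gap-eject} by writing the cheat-until-ejection strategy as always-cheat up to the random ejection time $\tau$ and comparing it path by path with the always-cheat strategy of \cref{lem:gap}. Under always-cheat, the stake falls by the factor $(1-\Slash)$ only on audited rounds. Each audit is Bernoulli$(\Chal)$ and independent, so the $k$-th audited cheat occurs at the $k$-th success of a Bernoulli$(\Chal)$ sequence. The stake first drops below $\Stake_{\min}$ after $N^\ast$ detections, so $\tau\sim\mathrm{NegBin}(N^\ast,\Chal)$, counted as the round index of the $N^\ast$-th success. The probability generating function of a sum of $N^\ast$ i.i.d.\ geometric variables gives $\zeta=\E[\Disc^\tau]=\bigl(\Chal\Disc/(1-(1-\Chal)\Disc)\bigr)^{N^\ast}$. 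This step is routine.

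Next, decompose payoffs on the event structure. Up to $\tau$, the ejection protocol and the no-ejection protocol of \cref{lem:gap} generate identical stakes and payoffs. After $\tau$, the ejected provider earns $0$ per round. The always-cheat provider in \cref{lem:gap} would instead earn $r+\Saving-\xi_t\Slash\Stake_t\le r+\Saving$ per round. Therefore
\[
U_{\mathrm{eject}} = U_P(\Cheat=1) - \E\Bigl[\sum_{t\ge\tau}\Disc^t\bigl(r+\Saving-\xi_t\Slash\Stake_t\bigr)\Bigr]
\ge U_P(\Cheat=1) - \E[\Disc^\tau]\,\frac{r+\Saving}{1-\Disc}.
\]
The inequality uses the strong Markov property: the post-$\tau$ continuation of the i.i.d.\ audit process is independent of $\mathcal F_\tau$. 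It also drops the nonnegative slashing term, which can only help. Subtracting $U_P(\HONEST)=r/(1-\Disc)$ and dividing by it gives
\[
\frac{\Delta U_{\mathrm{eject}}}{U_P(\HONEST)}\ge \frac{\Delta U_{\mathrm{Lem.\,\ref{lem:gap}}}}{U_P(\HONEST)}-\zeta\Bigl(1+\frac{\Saving}{r}\Bigr),
\]
which is exactly \eqref{eq:gap-eject}.

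The main obstacle is making the post-ejection comparison rigorous. The factor $\E[\Disc^\tau]$ must factor out cleanly, which requires the tail bound to be uniform in the post-$\tau$ history; the crude bound $r+\Saving$ per round ensures this. We also have to fix the convention for ejection timing. If ejection occurs at the end of round $\tau$, with the reward of that round still paid, the tail starts at $\tau+1$ and the bound improves by a factor $\Disc$. Either convention yields the stated inequality, so we will adopt whichever matches the definition of $\tau$ as the trial index of the $N^\ast$-th success. Finally, we will check that $N^\ast$ is the right count: after $k$ detections the stake is $\Stake_0(1-\Slash)^k$, and the ceiling in the definition of $N^\ast$ is exactly the smallest $k$ with $\Stake_0(1-\Slash)^k<\Stake_{\min}$, up to the boundary case of equality.
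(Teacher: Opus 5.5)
Your proposal is correct and follows the paper's proof essentially step for step: you identify $\tau$ as the negative-binomial time of the $N^\ast$-th audited cheat, bound the forfeited post-$\tau$ continuation payoff by $\zeta(r+\Saving)/(1-\Disc)$, and divide by $U_P(\HONEST)=r/(1-\Disc)$. The strong Markov property is not needed, since $\sum_{t\geq\tau}\Disc^t=\Disc^\tau/(1-\Disc)$ holds pathwise; and the equality boundary case is harmless, because requiring more detections only lowers $\E[\Disc^\tau]$ below $\zeta$.
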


\begin{proof}[Proof of \cref{thm:gap-eject}]
Let $\tau=\inf\{t:\Stake_t<\Stake_{\min}\}$. Under always-cheat $\Stake_t=\Stake_0(1-\Slash)^{N_t}$ with $N_t$ the number of audited cheats, so ejection occurs after $N^\ast=\lceil\log(\Stake_0/\Stake_{\min})/\log(1/(1-\Slash))\rceil$ audits, $\tau\sim\mathrm{NegBin}(N^\ast,\Chal)$, and $\zeta=\E[\Disc^\tau]=(\Chal\Disc/(1-(1-\Chal)\Disc))^{N^\ast}$. Cheat-until-ejection loses at most the post-$\tau$ continuation payoff $\E[\sum_{t\geq\tau}\Disc^t(r+\Saving)]\leq\zeta(r+\Saving)/(1-\Disc)$; dividing by $U_P(\HONEST)$ gives \eqref{eq:gap-eject}. The numerical example in \cref{rem:eject-bite} shows the correction is negligible.
\end{proof}

\begin{remark}[Magnitude of the ejection correction]
\label{rem:eject-bite}
For deployed ranges $\Chal \in [5\cdot 10^{-3},10^{-2}]$, $\Slash \in [0.10,0.25]$, $\Stake_{\min}/\Stake_0 \in [10^{-2},10^{-1}]$, and $\Disc \in [0.9,0.98]$, we obtain $N^\ast \in [9,44]$, $\E[\tau] \in [10^3,10^4]$, and $\zeta \in [10^{-60},10^{-9}]$. The correction is negligible relative to the $10^{-2}$ scale of \eqref{eq:gap-fraction}: minimum-stake ejection affects only the terminal boundary condition and does not close the repeated-game gap.
\end{remark}

\begin{proof}[Proof of \cref{cor:asymptotic}]
Equation \eqref{eq:disc-star} can be written as
\[
\Saving
=
\bar h\Stake
\left[
1+
\frac{\Disc(1-\Disc^\Vest)}{1-\Disc}\bar h
\right].
\]
Since
$
\frac{\Disc(1-\Disc^\Vest)}{1-\Disc}
\to
\Vest\ 
\text{as } \Disc \to 1,
$
the limiting stake requirement is \eqref{eq:asymptotic-stake}. The divergence of the memoryless requirement follows directly from \eqref{eq:tspic-condition}, whose right-hand side contains the factor $\Disc\Chal\Slash/(1-\Disc)$.
\end{proof}

\begin{remark}[The full-slashing limit]
\label{rem:full-slashing}
Suppose $\Slash=1$ and the provider exits, or is ejected, once its stake is exhausted. Under the binding one-round condition $\Chal\Stake_0=\Saving$, the cheat-until-detection strategy earns $r+\Saving$ per round until the first audit, loses the entire stake $\Stake_0$ upon detection, and collects nothing thereafter. Its payoff is
\begin{equation}
U_P^{\mathrm{full}}
=
\frac{r+\Saving-\Chal\Stake_0}{1-(1-\Chal)\Disc}
=
\frac{r}{1-(1-\Chal)\Disc}
<
\frac{r}{1-\Disc}
=
U_P(\HONEST),
\label{eq:full-slashing}
\end{equation}
so the one-round condition alone is sufficient under full slashing: detection now terminates the relationship, and the deviator loses the continuation value of all future rounds, not just a fraction of the current stake. Deployed systems nonetheless use partial slashing, because re-execution audits of LLM inference are noisy (floating-point nondeterminism and hardware heterogeneity can produce false disagreement) and because full slashing couples each detected fault to immediate ejection, destabilizing the operator set under correlated events. \cref{sec:eval-baselines} confirms this numerically: the full-slashing baseline drives every simulated deviation net-negative, at a cost concentrated exactly where audits are imperfect.
\end{remark}

\begin{remark}[Continuous-action extension]
\label{rem:continuous}
The binary action model extends to continuous cheating intensities $a\in[0,1]$ with increasing concave windfall $\Saving(a)$, $\Saving(0)=0$, and increasing concave detection probability $\pi(a)$. For any stationary average intensity, the provider can be compared to a binary strategy concentrating the same effective deviation at $a=1$, so the gap bounds of \cref{lem:gap,thm:gap,thm:class} apply with $(\Saving,\Slash_{\max})$ replaced by $(\Saving(1),\Slash_{\max}\pi(1))$. Continuous effort choices change which detectable deviation is the worst case; they do not remove the gap.
\end{remark}

\section{Protocol Lemmas}
\label{app:protocol-lemmas}

\begin{lemma}[Closed-Form Robust Update]
\label{lem:robust-update}
The robust posterior follows the recursion in \eqref{eq:rep-update} with $\Cheat^\dagger$ replaced by $\underline\Cheat$. Hence
\[
\bar\Rep_{t+1}
=
1
\quad
\text{if } s_t=1,
\]
and
\[
\bar\Rep_{t+1}
=
\frac{\bar\Rep_t(1-\underline\Cheat)}
{\bar\Rep_t(1-\underline\Cheat)+(1-\bar\Rep_t)}
\quad
\text{after a clean audit}.
\]
If the round is not audited, then $\bar\Rep_{t+1}=\bar\Rep_t$. In particular, $\bar\Rep_t$ is a public function of the audit counts, the number of detected cheats, and the fixed parameters $(\underline\Cheat,\pi_0)$.
\end{lemma}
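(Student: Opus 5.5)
Write $\bar\Rep_t(\Cheat^\dagger)=\Prob[\theta=\CHEAT\mid h_t^{\mathrm{pub}},\Cheat^\dagger]$ for the exact posterior at a fixed candidate rate. The approach is to show that this posterior is, along every public history, pointwise decreasing in $\Cheat^\dagger$ on $[\underline\Cheat,1]$. The supremum in \cref{def:robust-rep} is then attained at $\Cheat^\dagger=\underline\Cheat$, so the robust posterior simply runs \eqref{eq:rep-update} with $\underline\Cheat$ plugged in.

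I will not argue step-by-step through the recursion, since monotonicity of a one-step map in two arguments needs care. Instead I solve \eqref{eq:rep-update} in closed form. Starting from $\Rep_0=\pi_0$, there are three cases.
\begin{itemize}
\item If any audited round has $s_\tau=1$, the posterior jumps to $1$ and stays there. The clean-audit map fixes $1$, and an unaudited round is the identity.
\item Otherwise, let $k_t$ be the number of clean audits before $t$. In odds form a clean audit multiplies the odds $\Rep/(1-\Rep)$ by $(1-\Cheat^\dagger)$, and an unaudited round leaves them unchanged. Hence
\[
\Rep_t=\frac{\pi_0(1-\Cheat^\dagger)^{k_t}}{\pi_0(1-\Cheat^\dagger)^{k_t}+1-\pi_0}.
\]
\item The cross-check is direct Bayes. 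The audit indicator is independent of the type and action, so audit times cancel from the likelihood ratio. Each clean audit contributes likelihood $1$ under $\HONEST$ and $1-\Cheat^\dagger$ under the strategic type.
\end{itemize}
This already shows that $\bar\Rep_t$ depends only on $(k_t,\mathbf 1[\text{some detection}],\pi_0,\Cheat^\dagger)$.

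Monotonicity then follows. In the detected case the posterior equals $1$ for every $\Cheat^\dagger$, so the supremum is $1$. In the clean case, $x\mapsto x/(x+1-\pi_0)$ is increasing in $x>0$, and $(1-\Cheat^\dagger)^{k_t}$ is nonincreasing in $\Cheat^\dagger$. The supremum over $[\underline\Cheat,1]$ is therefore attained at $\underline\Cheat$ and equals the closed form with $\underline\Cheat$. Since that closed form is exactly what the $\underline\Cheat$-recursion produces, the robust posterior satisfies the stated recursion: it jumps to $1$ on detection, updates by the clean-audit map with $\underline\Cheat$, and is unchanged on unaudited rounds.

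Two points need care. First, when $\Cheat^\dagger=1$ a clean audit gives posterior $0$; this is harmless because $\underline\Cheat$ is the maximizer. Second, a detected cheat has probability $0$ under the honest type, and the recursion's value of $1$ is exactly Bayes in that case. The main obstacle is this history-level formulation. Taking the supremum at each step separately would need joint monotonicity of the update in $(\Rep,\Cheat^\dagger)$, which holds but is cleaner to verify through the closed form, so I route everything through the odds representation. The public-function claim then follows immediately from the closed form.
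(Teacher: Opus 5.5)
Your proof is correct, and it argues at a different level from the paper. The paper's proof is a single step. For a fixed prior $\rho$, the clean-audit map $g(p)=\rho(1-p)/(\rho(1-p)+1-\rho)$ is decreasing in the candidate rate $p$, so the supremum is attained at $p=\underline\Cheat$; a detection gives posterior $1$ for every $p$. Taken literally, that settles only the supremum of one update at a fixed prior. \cref{def:robust-rep}, however, takes the supremum of the whole-history posterior, and there the prior entering each step itself depends on $p$. Identifying the two needs two further ingredients: the update must also be increasing in $\rho$, and an induction over rounds must carry the argument forward. This is exactly the joint monotonicity you point to. Your odds form $\pi_0(1-p)^{k_t}/(\pi_0(1-p)^{k_t}+1-\pi_0)$ makes monotonicity in $p$ visible at the history level, so no induction is needed, and it also yields the ``public function of audit counts'' clause directly. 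The paper's route is shorter and exhibits the on-chain recursion immediately; yours is the more complete argument for the definition as written.

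Two small caveats. First, the clean-audit map fixes $1$ only for $p<1$. At $\Rep_t=1$, $p=1$ it is $0/0$, on a probability-zero history. This is harmless for the supremum when $\underline\Cheat<1$, and the lemma's own formula is equally undefined there when $\underline\Cheat=1$. Second, your direct-Bayes cross-check assumes audits are independent of type. That holds in the baseline stage game but not in the full mechanism, where $\Chal(\mathcal{S}_t)$ depends on past behavior through $z_t$. This does not affect the proof, since your main line, like the paper, takes \eqref{eq:rep-update} itself as the fixed-$\Cheat^\dagger$ posterior.
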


\begin{proof}
For a clean audit, the posterior map is
\[
g(p)
=
\frac{\rho(1-p)}{\rho(1-p)+(1-\rho)},
\]
where $\rho$ is the prior probability of the strategic type and $p$ is the candidate cheating rate. The map is decreasing in $p$ on $[\underline\Cheat,1]$, so the supremum in \cref{def:robust-rep} is attained at $p=\underline\Cheat$. If cheating is detected, the posterior is $1$ for every admissible value of $\Cheat^\dagger$, since a detected cheat has probability zero under the committed-honest type.
\end{proof}

\begin{lemma}[Honest-Reputation Decay]
\label{lem:martingale}
For any $\underline\Cheat \in (0,1]$, under honest play and any audit schedule measurable with respect to public history, the robust posterior $\bar\Rep_t$ is a supermartingale. In particular,
$
\E[\bar\Rep_{t+1}\mid h_t^{\mathrm{pub}}]
\leq
\bar\Rep_t,
$
and $\bar\Rep_t \to 0$ almost surely. After each clean audit, the posterior contracts by at least a factor $(1-\underline\Cheat)$.
\end{lemma}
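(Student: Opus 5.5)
The proof splits the claim into three parts: the contraction after a clean audit, the supermartingale inequality, and almost-sure convergence to zero.

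\emph{Contraction.} By \cref{lem:robust-update}, after a clean audit
\[
\bar\Rep_{t+1}=g(\bar\Rep_t)=\frac{\bar\Rep_t(1-\underline\Cheat)}{1-\underline\Cheat\bar\Rep_t}.
\]
Two facts about the denominator settle this part. Since $\bar\Rep_t\in[0,1]$, it lies in $[1-\underline\Cheat,1]$, so $\bar\Rep_{t+1}\leq\bar\Rep_t$, with equality only at $0$ or $1$. Under honest play a detected cheat has probability zero. So once $\pi_0<1$, the posterior never reaches $1$, and every clean audit strictly lowers it. A caveat on the stated factor: the sharp contraction is by $(1-\underline\Cheat)/(1-\underline\Cheat\bar\Rep_t)$. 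This equals $(1-\underline\Cheat)$ only as $\bar\Rep_t\to0$. I would therefore state it as a multiplicative decrease that is bounded away from $1$ uniformly on $\{\bar\Rep_t\leq\bar\Rep_0\}$, namely by $\kappa=(1-\underline\Cheat)/(1-\underline\Cheat\pi_0)<1$. This is valid because the path is nonincreasing and so stays in $[0,\pi_0]$.

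\emph{Supermartingale inequality.} Condition on $h_t^{\mathrm{pub}}$ and let $c_t$ be the audit probability. It is public-measurable by hypothesis and independent of the action under honest play. An honest round produces either a clean audit, with probability $c_t$, or no audit, so
\[
\E[\bar\Rep_{t+1}\mid h_t^{\mathrm{pub}}]=c_t\,g(\bar\Rep_t)+(1-c_t)\bar\Rep_t\leq\bar\Rep_t.
\]
In fact the path is pathwise nonincreasing, which is stronger than the inequality needed.

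\emph{Almost-sure convergence.} After $N_t$ clean audits, $\bar\Rep_t\leq\pi_0\kappa^{N_t}$. It then suffices to show $N_t\to\infty$ almost surely. For the mechanism, $\Chal(\mathcal{S}_t)\geq\Chal_0>0$, so the audit indicators dominate independent $\mathrm{Bernoulli}(\Chal_0)$ trials, and the second Borel--Cantelli lemma (or a direct coupling) gives infinitely many audits.

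The main obstacle is the phrase ``any audit schedule''. If audit probabilities may decay with $\sum_t c_t<\infty$, then finitely many audits occur with positive probability, and convergence to $0$ fails. I would handle this by using the floor $\Chal_0$, or by assuming $\sum_t c_t=\infty$ together with the conditional Borel--Cantelli lemma (L\'evy's extension). Without such an assumption I would state only that the limit exists by supermartingale convergence.
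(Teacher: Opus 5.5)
Your proof is correct and has the same structure as the paper's. It uses the clean-audit map from \cref{lem:robust-update} and the fact that unaudited rounds leave the posterior unchanged, so the path is nonincreasing and hence a supermartingale. It then uses the floor $\Chal(\mathcal{S}_t)\geq\Chal_0>0$ to force infinitely many audits and $\bar\Rep_t\to0$.

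The one substantive difference is the contraction factor, and there your caveat is right and the paper's proof is wrong. The paper asserts
\[
\frac{\bar\Rep_t(1-\underline\Cheat)}{\bar\Rep_t(1-\underline\Cheat)+(1-\bar\Rep_t)}\leq(1-\underline\Cheat)\bar\Rep_t ,
\]
but the denominator equals $1-\underline\Cheat\bar\Rep_t\leq 1$. The inequality therefore runs the other way, strictly whenever $\bar\Rep_t>0$ and $\underline\Cheat<1$. So the lemma's last sentence is false as literally stated. Your factor $\kappa=(1-\underline\Cheat)/(1-\underline\Cheat\pi_0)<1$ is a correct repair: it is uniform on $[0,\pi_0]$, and the honest path never leaves that interval. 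It still gives $\bar\Rep_t\leq\pi_0\kappa^{N_t}$, which is all the convergence step needs. The same linear factor reappears in the closed form for $\E[\bar\Rep_t]$ in \cref{lem:sybil}, which is therefore only approximate.

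Your remaining caveats are also accurate, and the paper's proof needs them implicitly. The limit $\bar\Rep_t\to0$ requires $\pi_0<1$. It also requires the audit floor $\Chal_0$, or $\sum_t c_t=\infty$ together with L\'evy's conditional Borel--Cantelli lemma. It does not hold for ``any audit schedule'': with no audits the posterior stays at $\pi_0$.

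One small point: the audit indicators are not independent, so the plain second Borel--Cantelli lemma should be applied only after your coupling $\mathbf{1}[U_t<c_t]\geq\mathbf{1}[U_t<\Chal_0]$ with i.i.d.\ uniforms $U_t$.
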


\begin{proof}
Under honest play, every audited round is clean. By \cref{lem:robust-update}, a clean audit maps $\bar\Rep_t$ to
\[
\frac{\bar\Rep_t(1-\underline\Cheat)}
{\bar\Rep_t(1-\underline\Cheat)+(1-\bar\Rep_t)}
\leq
(1-\underline\Cheat)\bar\Rep_t .
\]
Unaudited rounds leave $\bar\Rep_t$ unchanged. Hence $\bar\Rep_t$ is a nonnegative supermartingale that contracts by at least the factor $(1-\underline\Cheat)$ at each audit. Under the mechanism's audit rule \eqref{eq:chal-rule}, $\Chal(\mathcal{S}_t)\geq\Chal_0>0$ in every round, so audits occur infinitely often almost surely and $\bar\Rep_t\to0$ almost surely.
\end{proof}

\begin{lemma}[Anti-Sybil Condition]
\label{lem:sybil}
Let $c_{\mathrm{reg}}$ denote the cost of registering a fresh identity, including gas and AVS opt-in costs. Under honest play, define
\[
V(\bar\Rep_0)
=
\E\left[
\sum_{t\geq 0}
\Disc^t
\left(
r-\Chal_t(\Slash_0+\Slash_1\bar\Rep_t)\Stake
\right)
\middle| \bar\Rep_0
\right].
\]
The value of starting with a clean reputation rather than the registration prior is
\begin{equation}
V(0)-V(\pi_0)
=
\frac{
\Slash_1\pi_0\Chal_0\Stake
}{
1-\Disc(1-\Chal_0\underline\Cheat)
}.
\label{eq:sybil-premium}
\end{equation}
Respawning under a fresh identity is unprofitable if and only if
\begin{equation}
\pi_0
\geq
\pi_0^\ast
=
\frac{
c_{\mathrm{reg}}
\bigl(1-\Disc(1-\Chal_0\underline\Cheat)\bigr)
}{
\Slash_1\Chal_0\Stake
}.
\end{equation}
\end{lemma}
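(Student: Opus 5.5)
The plan is to reduce $V(0)-V(\pi_0)$ to a discounted sum of expected posteriors under honest play, compute that sum from the audit-driven contraction of \cref{lem:martingale}, and then compare the resulting premium with $c_{\mathrm{reg}}$. Two points are flagged up front. First, the closed form \eqref{eq:sybil-premium} is exact only under a linear contraction of the posterior; under the exact Bayes recursion of \cref{lem:robust-update} the argument gives an upper bound. Second, the ``if and only if'' is equivalent to $c_{\mathrm{reg}}\le V(0)-V(\pi_0)$, and its direction depends on which profit comparison is meant.

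\emph{Step 1 (the audit rate is constant under honest play).} Under honest play with $\mathcal{S}_0=0$, I take $z_t=0$ in every round, consistent with the baseline of $\mu(0)=0$. Then \eqref{eq:susp-update} keeps $\mathcal{S}_t=0$, so \eqref{eq:chal-rule} gives $\Chal_t=\Chal_0$ for all $t$, independently of the identity's history. The terms $r$ and $\Chal_0\Slash_0\Stake$ in $V$ are then identical for both initial conditions and cancel. What remains is
\[
V(0)-V(\pi_0)=\Slash_1\Chal_0\Stake\sum_{t\ge0}\Disc^t\bigl(\E[\bar\Rep_t\mid\bar\Rep_0=\pi_0]-\E[\bar\Rep_t\mid\bar\Rep_0=0]\bigr).
\]
Starting from $\bar\Rep_0=0$, the update in \cref{lem:robust-update} fixes $0$, so the second expectation vanishes identically.

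\emph{Step 2 (expected posterior path).} Write $m_t=\E[\bar\Rep_t\mid\bar\Rep_0=\pi_0]$. Each round is audited independently with probability $\Chal_0$; an audit is clean under honest play, and an unaudited round leaves $\bar\Rep_t$ unchanged. I will use the linear contraction $\bar\Rep\mapsto(1-\underline\Cheat)\bar\Rep$ from \cref{lem:martingale} as the per-audit update. This yields $m_{t+1}=(1-\Chal_0\underline\Cheat)m_t$, hence $m_t=\pi_0(1-\Chal_0\underline\Cheat)^t$. Summing the geometric series gives
\[
\sum_{t\ge0}\Disc^t m_t=\frac{\pi_0}{1-\Disc(1-\Chal_0\underline\Cheat)},
\]
and multiplying by $\Slash_1\Chal_0\Stake$ gives \eqref{eq:sybil-premium}.

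\emph{Step 3 (the exact Bayes map).} The exact clean-audit map $g(\rho)=\rho(1-\underline\Cheat)/(1-\underline\Cheat\rho)$ is not linear. Since $g(\rho)\le(1-\underline\Cheat)\rho$, the computation in Step 2 bounds the true premium from above. It holds with equality when the posterior's odds are the tracked quantity, because the odds contract exactly by $(1-\underline\Cheat)$ per clean audit. It also agrees with the true premium to first order in $\pi_0$. I will present \eqref{eq:sybil-premium} for the linearized reputation and state the exact-Bayes version as ``$\le$''. The iff is then exact for the linearized rule and sufficient in one direction otherwise.

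\emph{Step 4 (comparison with $c_{\mathrm{reg}}$).} The premium is linear and increasing in $\pi_0$. Solving $c_{\mathrm{reg}}=\Slash_1\pi_0\Chal_0\Stake/\bigl(1-\Disc(1-\Chal_0\underline\Cheat)\bigr)$ for $\pi_0$ gives exactly $\pi_0^\ast$, so $\pi_0\ge\pi_0^\ast$ is equivalent to $V(0)-V(\pi_0)\ge c_{\mathrm{reg}}$. To match the direction stated in the lemma, respawning must be read as unprofitable precisely when the reputational handicap of a fresh identity, which starts at the prior $\pi_0$ rather than at the clean state $0$ of a long-lived honest identity, is at least the registration cost. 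I take this reading, under which a prior at or above $\pi_0^\ast$ makes a fresh identity no better than the history it discards. By contrast, if the comparison is the gain of a discredited identity moving back down to $\pi_0$, the same algebra reverses the inequality. This is the point at which I would check the paper's intended comparison most carefully.

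\emph{Main obstacle.} The main obstacle is Step 3. The honest exact statement is the linearized identity together with a one-sided bound under exact Bayes; if an exact identity is required for the exact recursion, I would rework the argument in odds space.
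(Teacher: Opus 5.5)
Your Steps 1, 2 and 4 match the paper's proof. The paper also (implicitly) takes the honest-path audit rate to be $\Chal_0$, asserts $\E[\bar\Rep_t\mid\bar\Rep_0=\pi_0]=\pi_0(1-\Chal_0\underline\Cheat)^t$ ``by \cref{lem:robust-update}'', sums the geometric series, and rearranges. Your Step 1 is tighter than the paper: since $z_t\in[0,1]$ and $\E[z_t]=\mu(0)=0$, you get $z_t=0$ a.s. and hence $\mathcal{S}_t\equiv0$. You are also right that the linear expected path does not follow from the exact recursion.

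The genuine gap is that Step 3 gets the direction wrong. In your own expression for the clean-audit map, the denominator $1-\underline\Cheat\rho$ is at most $1$, so
\[
g(\rho)=\frac{(1-\underline\Cheat)\rho}{1-\underline\Cheat\rho}\;\ge\;(1-\underline\Cheat)\rho .
\]
The inequality is strict for $\rho,\underline\Cheat\in(0,1)$; for example, $\rho=\underline\Cheat=\tfrac12$ gives $\tfrac13>\tfrac14$. The inequality you took from \cref{lem:martingale} is reversed in the paper itself. Consequently $\E[\bar\Rep_{t+1}\mid\bar\Rep_t]=(1-\Chal_0)\bar\Rep_t+\Chal_0 g(\bar\Rep_t)\ge(1-\Chal_0\underline\Cheat)\bar\Rep_t$, so $m_t\ge\pi_0(1-\Chal_0\underline\Cheat)^t$. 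This makes \eqref{eq:sybil-premium} a \emph{lower} bound on the exact-Bayes premium, not an upper bound. It also flips which half of the equivalence survives: under exact Bayes, $\pi_0\ge\pi_0^\ast$ still implies $V(0)-V(\pi_0)\ge c_{\mathrm{reg}}$, but the converse can fail. Your odds-space remark is correct, but it gives an exact identity only for a slashing rule written in odds, with $\pi_0/(1-\pi_0)$ in place of $\pi_0$.

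On Step 4, your algebra $\pi_0\ge\pi_0^\ast\iff V(0)-V(\pi_0)\ge c_{\mathrm{reg}}$ is right. It also exposes an inconsistency in the paper: its proof declares respawning unprofitable exactly when the premium does \emph{not exceed} $c_{\mathrm{reg}}$, which rearranges to $\pi_0\le\pi_0^\ast$, the opposite of the statement.

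However, the reading you adopt to recover the stated direction does not come from any payoff comparison, so it should not be presented as a proof. An identity whose posterior is already at most $\pi_0$ never gains from respawning. So the handicap $V(0)-V(\pi_0)$ is not what a respawner weighs against $c_{\mathrm{reg}}$.

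Nor does the natural comparison simply reverse the inequality, as you suggest. Take a detected identity at $\bar\Rep=1$ paying $c_{\mathrm{reg}}$ to restart at $\pi_0$. Your linearized dynamics give $V(\pi_0)-V(1)=(1-\pi_0)\Slash_1\Chal_0\Stake/\bigl(1-\Disc(1-\Chal_0\underline\Cheat)\bigr)$. Respawning is then unprofitable if and only if $\pi_0\ge 1-\pi_0^\ast$.

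What you can actually establish is the premium formula (exact for the linearized reputation, a lower bound under exact Bayes) together with the algebraic equivalence above. The lemma's final sentence needs an explicitly justified respawn comparison, which neither you nor the paper provides.
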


\begin{proof}
Under honest play, $\E[\bar\Rep_t\mid\bar\Rep_0=\pi_0]=\pi_0(1-\Chal_0\underline\Cheat)^t$ by \cref{lem:robust-update}, so
\[
V(0)-V(\pi_0)
=
\sum_{t\geq0}\Disc^t\Chal_0\Slash_1\Stake\,\pi_0(1-\Chal_0\underline\Cheat)^t
=
\frac{\Slash_1\pi_0\Chal_0\Stake}{1-\Disc(1-\Chal_0\underline\Cheat)},
\]
and respawning is unprofitable exactly when this value does not exceed $c_{\mathrm{reg}}$, which rearranges to the stated threshold.
\end{proof}

\begin{lemma}[Cold-Start Exposure]
\label{lem:coldstart}
A new honest operator has expected slashing exposure over the first $\Vest$ rounds of at most
$
(\Slash_0+\pi_0\Slash_1)\Chal_0\Vest\Stake_0 .
$
Thus cold-start cost is linear in the registration prior $\pi_0$.
\end{lemma}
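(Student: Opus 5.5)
The bound will come from writing the exposure as a sum of per-round expected slashing amounts and controlling each term. In round $t$, an honest operator is slashed only if an audit occurs and the audit reports cheating. In the baseline model audits are perfect, so this event has probability zero. The statement is therefore really an upper bound on \emph{exposure}: the expected penalty the operator would face if each audit produced an adverse outcome, as in the noisy-audit setting of \cref{sec:eq-robust}. I will define the round-$t$ exposure as $\E[\Chal(\mathcal{S}_t)(\Slash_0+\Slash_1\bar\Rep_t)B_t]$, read off from \eqref{eq:rep-slash-rule} applied against the vested balance \eqref{eq:vesting-schedule}. The total exposure is then the sum of this quantity over $t=0,\dots,\Vest-1$.

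I will bound the three factors separately.
\begin{itemize}
\item \emph{Stake factor.} Under honest play there is no detected cheat, so nominal stake never drops. Vesting only reduces the slashable balance, so $B_t\le\Stake_0$.
\item \emph{Reputation factor.} By \cref{lem:martingale}, $\bar\Rep_t$ is a supermartingale started at $\bar\Rep_0=\pi_0$. Clean audits only contract it, so $\bar\Rep_t\le\pi_0$ pathwise. This gives $\Slash_0+\Slash_1\bar\Rep_t\le\Slash_0+\pi_0\Slash_1$.
\item \emph{Audit-rate factor.} The audit rate must be $\Chal_0$ in expectation for an honest operator. By \eqref{eq:susp-update} with $\mathcal{S}_0=0$, $\mathcal{S}_t$ is a convex combination of past signals $z_s$. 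Under honest play, $\E[z_s]=\mu(0)=0$ and $z_s\ge0$, so $z_s=0$ almost surely. Hence $\mathcal{S}_t=0$ and $\Chal(\mathcal{S}_t)=\Chal_0$ by \eqref{eq:chal-rule}.
\end{itemize}

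Multiplying the three bounds gives per-round exposure at most $(\Slash_0+\pi_0\Slash_1)\Chal_0\Stake_0$. Summing over $\Vest$ rounds, undiscounted as an upper bound, yields $(\Slash_0+\pi_0\Slash_1)\Chal_0\Vest\Stake_0$. This bound is affine in $\pi_0$, which is the claimed linearity.

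The main obstacle is the audit-rate step. It relies on reading $\mu(0)=0$ as a pathwise statement that honest operators generate zero risk signal. If the signal has honest-side noise, then $\mathcal{S}_t$ need not vanish, and the bound must absorb the term $\beta\E[\mathcal{S}_t]$. In that case I would first try restricting to honest-mean-zero signals, as the model's assumption $\mu(0)=0$ together with $z_t\ge0$ forces. Otherwise I would state the exposure with $\Chal_0$ replaced by $\Chal_0+\beta\sup_t\E[\mathcal{S}_t]$. The remaining steps are direct consequences of \cref{lem:robust-update,lem:martingale}.
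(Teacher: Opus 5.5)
Your proposal is correct and follows the paper's argument. Both bound the per-round exposure $\Chal(\mathcal{S}_t)(\Slash_0+\Slash_1\bar\Rep_t)\Stake$ by $\Chal_0(\Slash_0+\pi_0\Slash_1)\Stake_0$ via \cref{lem:martingale} and then sum undiscounted over $\Vest$ rounds. Your version is somewhat more careful than the paper's two-line proof: it uses the pathwise bound $\bar\Rep_t\le\pi_0$ instead of $\E[\bar\Rep_t]\le\pi_0$, and it justifies $\Chal(\mathcal{S}_t)=\Chal_0$ from $\mu(0)=0$ and $z_t\ge 0$, a step the paper takes for granted.
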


\begin{proof}
Under honest play, $\E[\bar\Rep_t]\leq\pi_0$ by \cref{lem:martingale}, so per-round expected exposure is at most $\Chal_0(\Slash_0+\Slash_1\pi_0)\Stake_0$, and summing over the first $\Vest$ rounds gives $(\Slash_0+\pi_0\Slash_1)\Chal_0\Vest\Stake_0$.
\end{proof}

\section{Robustness Details}
\label{sec:eq-robust}

\begin{remark}[Interpreting the regularity condition]
\label{rem:regularity}
The only non-primitive assumption of \cref{thm:main} is that the steady-state posterior $\bar\Rep(\cdot)$ is concave-increasing on $[\underline\Cheat,1]$: a higher stationary cheating rate raises steady-state suspicion with diminishing marginal effect, so detection exposure cannot be diluted by cheating more uniformly. The assumption holds by inspection in the calibration of \cref{sec:evaluation}, where $\bar\Rep(\Cheat)=\min\{1,\Cheat\}$; outside this case, $\bar\Rep$ is the fixed point of the robust Bayes recursion of \cref{lem:robust-update} under audit rate $\bar\Chal(\Cheat)$, and the condition can be checked from the primitives $(\Chal_0,\beta,\mu,\underline\Cheat)$ on a grid, which is how the artifact's solver (\texttt{solve\_disc\_star.py}) computes $\Cheat^{\mathrm{br}}$. What the condition rules out are signal responses in which a marginal increase of the cheating rate \emph{reduces} steady-state detection exposure; no reputation-based mechanism can deter in that regime, since the provider would then control its own penalty rate downward.
\end{remark}

\paragraph*{Signal response.}
Deterrence depends on $\mu'(0)>0$, with $\beta\mu'(0)$ as the relevant deployment quantity. If a substitute matched the contracted model across all signal channels while preserving the cost saving $\Saving$, $\mu'(0)$ would approach zero and the mechanism would degrade to the memoryless baseline. Multiple signals mitigate this: the output sketch is the most fragile channel, cross-provider agreement the hardest to evade, and a substitute close enough to pass output-side tests typically gives up part of the cost saving that motivates cheating. Bounding $\mu'(0)$ under fully adversarial distribution matching remains an empirical limitation (\cref{app:calibration}).

The threshold in \cref{thm:main} is conservative with respect to provider optimization: a bounded-rational provider obtains weakly lower deviation payoff than the best-response $\Cheat^{\mathrm{br}}$, so $\Disc^\ast$ is an upper bound on required patience. We calibrate the signal response as $\mu(\Cheat)=0.45\Cheat^{0.5}$~\cite{kd-survey-2024,int4-llm-inference}, giving $\Disc^\ast_{\mathrm{base}}=0.9322$ at the stress calibration of \cref{tab:profit-comparison}. Perturbing any parameter except $\beta$ by $\pm10\%$ shifts $\Disc^\ast$ by less than $10^{-5}$, and $\beta\pm0.10$ moves it within $\{0.9155,0.9469\}$ (\cref{app:sensitivity}); in deployment, $\beta$ is chosen from the audit budget and a target false-positive rate. If the true discount factor falls below the threshold, the error scales gradually with $(\Disc^\ast-\Disc)_+$, recoverable by increasing $\Stake$ or $\Vest$. For heterogeneous patience, the AVS calibrates to the conservative lower bound $\underline\Disc=0.94$ from \eqref{eq:disc-decomp}. Challenger collusion is handled by an $m$-of-$n$ quorum requiring at least one honest challenger~\cite{eigenlayer-whitepaper}.

\section{Discussion: Strategic Protocols and General Deviations}
\label{sec:discussion}

\paragraph*{One player or two?} With a single strategic agent, every equilibrium concept collapses to an optimality condition: the analysis is a mechanism design problem in which the protocol commits to a rule and the provider best-responds, matching deployment, where AVSs commit to parameters at launch. The natural two-player extension endogenizes the audit policy: if audits are costly, the principal minimizes audit cost subject to the SPIC constraint, a Stackelberg problem for which \eqref{eq:disc-star} already supplies the agent's best-response correspondence.

\paragraph*{A Stackelberg audit-budget example.} We solve the principal's problem numerically at the stress calibration of \cref{sec:evaluation}. Under honest play the suspicion score decays to zero (\cref{lem:martingale}), so the honest-path audit cost is the baseline rate $\Chal_0$, and the principal minimizes $\Chal_0$ subject to $\Disc^\ast(\Chal_0,\beta)\leq\Disc_{\mathrm{target}}$, with $\beta$ bounded by a signal-quality budget. Since $\Disc^\ast$ is decreasing in both arguments (\cref{thm:main}), the optimum for each $\beta$ is the smallest $\Chal_0$ on the boundary $\Disc^\ast=\Disc_{\mathrm{target}}$. \cref{tab:stackelberg} reports the solution: at $\Disc_{\mathrm{target}}=0.95$, raising the signal gain from $\beta=0$ to $\beta=0.5$ cuts the required baseline audit rate from $0.370$ to $0.145$ (a $2.6\times$ budget saving), and at $\beta=1.0$ the baseline rate vanishes. At the same stake, no uniform (history-independent) audit rate achieves the constraint at all: the strengthened condition \eqref{eq:tspic-condition} requires $\Stake > \Saving\Disc/(1-\Disc)$, which fails at P3's stake $\Stake/r=0.30$. Signal responsiveness therefore substitutes for baseline audit budget, and history dependence is what makes the stake level feasible in the first place. At the deployed-challenge stake of \cref{tab:cost} ($\Stake/r\approx120$), deterrence is strong enough that the best response is $\Cheat^{\mathrm{br}}\approx0$ and the optimum is pinned by the cold-state margin alone, so the tradeoff is degenerate there; the active regime is the one shown.

\begin{table}[h]
\centering
\caption{Principal's minimum baseline audit rate $\Chal_0^\ast$ subject to $\Disc^\ast\leq\Disc_{\mathrm{target}}$, computed by bisection at the stress calibration ($\Slash_0=0.10$, $\Slash_1=0.40$, $\Vest=200$, $\Saving/r=0.25$, $\Stake/r=0.30$, $\mu(\Cheat)=0.45\Cheat^{0.5}$) using \texttt{exp\_stackelberg.py}. ``Uniform'' is the constant memoryless audit rate satisfying \eqref{eq:tspic-condition} at the same stake; it is infeasible at every target because $\Stake/r=0.30 < \Saving\Disc/(1-\Disc)$. The memoryless reference point securing this regime is P1's own $(\Stake/r,\Chal,\Slash)=(7.25,0.20,0.50)$.}
\label{tab:stackelberg}
\begin{tabular}{lccccc}
\toprule
$\Disc_{\mathrm{target}}$ & $\beta=0$ & $\beta=0.10$ & $\beta=0.25$ & $\beta=0.50$ & $\beta=1.0$ \\
\midrule
$0.90$ & $0.508$ & $0.463$ & $0.395$ & $0.283$ & $0.058$ \\
$0.92$ & $0.458$ & $0.413$ & $0.346$ & $0.233$ & $0.008$ \\
$0.94$ & $0.402$ & $0.357$ & $0.289$ & $0.177$ & $<10^{-4}$ \\
$0.95$ & $0.370$ & $0.325$ & $0.257$ & $0.145$ & $<10^{-4}$ \\
$0.96$ & $0.333$ & $0.288$ & $0.221$ & $0.108$ & $<10^{-4}$ \\
$0.98$ & $0.243$ & $0.198$ & $0.131$ & $0.018$ & $<10^{-4}$ \\
\bottomrule
\end{tabular}
\end{table}

\paragraph*{Stationary versus general deviations.} The impossibility direction (\cref{sec:gap}) needs only the stationary always-cheat deviation, so it applies a fortiori to richer strategy classes. The constructive direction (\cref{thm:main}) restricts to stationary mixed deviations with rates in $[\underline\Cheat,1]$. Two observations narrow the gap: the one-shot deviation principle still applies, and the history-dependent schedule raises audit intensity exactly after the public signals that profitable non-stationary rounds generate; and \cref{sec:evaluation} finds no profitable instance among non-stationary adversaries (periodic, bandit, cheat-then-exit, vesting-drain). Whether the guarantee transfers formally to the full behavioral class is open.

\section{Calibration Details}
\label{app:calibration}

This appendix records the source of every empirical parameter used in the calibration of \cref{sec:evaluation}, together with the degrees of freedom a re-runner could perturb. The artifact's \texttt{CALIBRATION\_PROVENANCE.md} contains the machine-checkable version of the same chain.

\paragraph*{Discount factor.} The decomposition $\Disc=(1-r_{\mathrm{out}})(1-r_{\mathrm{cap}})$ is instantiated as follows. The per-round (one-day) opt-out hazard $r_{\mathrm{out}}\in[0.5\%,1.5\%]$ is estimated from the EigenLayer deregistration event stream over a 90-day window, reproducible from the public \texttt{EigenLayerServiceManager} event log on the EigenLayer mainnet subgraph; the 25th-percentile lower bound $r_{\mathrm{out}}=0.5\%$ yields the conservative target $\underline\Disc=0.94$ used in \cref{sec:eq-robust}. The per-round capital cost $r_{\mathrm{cap}}\in[0.013\%,0.020\%]$ is stETH staking yield ($4.7\%$--$7.4\%$ APR divided by $365$), with the 75th percentile $r_{\mathrm{cap}}=0.014\%$. The two components are deployment-time estimates, so $\Disc$ should be perturbed as a single quantity rather than through its constituents independently.

\paragraph*{Risk-signal response.} The response $\mu(\Cheat)=a\Cheat^{q}$ uses $a=0.45$ and $q=0.5$. The square-root exponent reflects that token-level outputs of a substitute model show distinct distributional signatures at low injection rates, which are diluted by averaging at high rates; the closest published evidence is the knowledge-distillation detectability literature~\cite{kd-survey-2024}. The amplitude anchors $\mu(1)=0.45$, matching the empirical log-likelihood and latency gap between a contracted Llama-3-70B-class model and a quantized substitute~\cite{int4-llm-inference}. The analytical characterization of $\Disc^\ast$ holds for any concave $\mu$ with $\mu(0)=0$ and $\mu'(0)>0$, so a perturbation of $(a,q)$ only rescales the threshold; \cref{app:sensitivity} shows the effect is below $10^{-5}$.

\paragraph*{Empirical measurement of the response.} We measured $\mu$ directly on an open model pair~\cite{qwen25}: Qwen2.5-1.5B-Instruct (8-bit) as the contracted model, and its 4-bit quantized variant together with a 0.5B distilled variant as substitutes, over 200 templated prompts in five task categories, using the three signal channels of \cref{sec:mech-challenge} (\cref{fig:mu-curve}; artifact \texttt{exp\_mu\_measurement.py}). The composite signal is concave in the cheating rate for both substitutes, structurally confirming the assumption: the windowed token-distribution sketch is strongly concave and saturates at low injection rates, which is the mechanism behind the concavity, while the per-query log-probability gap and workload channels are close to linear. The fitted responses are $\mu(\Cheat)=0.57\Cheat^{0.83}$ (4-bit substitute) and $\mu(\Cheat)=0.83\Cheat^{0.71}$ (0.5B substitute), both with $R^2>0.98$: amplitude above, and concavity below, the conservative $(0.45,0.5)$ calibration. Recomputing the threshold with the measured responses lowers $\Disc^\ast$ from $0.9322$ to $0.909$ and $0.847$ at the stress calibration, so the calibration used in the headline results understates deterrence. The measurement uses templated prompts; naturalistic workloads remain future work, and the artifact caches all generations for recomputation without re-inference.

\begin{figure}[h]
\centering
\includegraphics[width=0.95\linewidth]{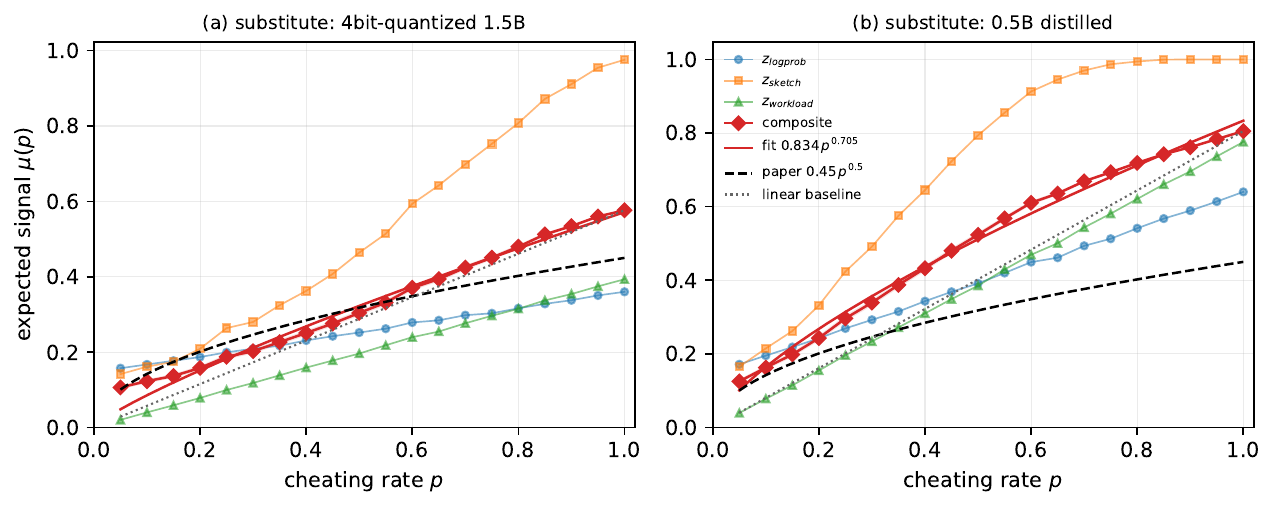}
\caption{Measured risk-signal response $\mu(\Cheat)$ on Qwen2.5-1.5B-Instruct (8-bit) versus two substitutes: 4-bit quantization (panel a) and 0.5B distilled (panel b). The composite signal (red, with fitted $a\Cheat^{q}$ curve) is concave on both substitutes; the windowed token-sketch component drives the concavity, while the per-query log-probability and workload components are close to linear. The dashed line is the paper's conservative calibration $0.45\Cheat^{0.5}$; the dotted line is the per-query linear baseline.}
\label{fig:mu-curve}
\end{figure}

\paragraph*{Cross-scale scaling of the response.} We repeated the measurement at three contracted-model scales (\cref{fig:mu-scaling}; artifact \texttt{exp\_mu\_scaling.py}): Qwen2.5 1.5B, 7B, and 14B (8-bit) as contracted models, against nine substitutes covering 4-bit quantization at fixed parameters and distilled models down to 0.5B, with relative inference cost (parameters $\times$ weight bits) from $0.5$ down to $0.03$ of the contracted model. Every one of the nine fits is concave, with $q\in[0.70,0.89]$ and $R^2\in[0.978,0.996]$, and every amplitude lies above the conservative calibration, $a\in[0.54,1.02]$. Quantization detectability is scale-invariant: the 4-bit substitutes give $(a,q)=(0.57,0.82)$ at 1.5B and $(0.54,0.79)$ at 7B. The amplitude grows with the contracted-to-substitute cost ratio $x$ and saturates at the signal ceiling: a censored fit $a\approx\min(1,\,0.45\,x^{0.41})$ matches all nine points with $R^2=0.84$ and recovers the calibration anchor $0.45$ at unit cost ratio. Recomputing the threshold at the stress calibration for each measured response lowers $\Disc^\ast$ from $0.9322$ to between $0.79$ and $0.92$, so the analytical calibration understates deterrence at every measured scale.

\begin{figure}[h]
\centering
\includegraphics[width=0.95\linewidth]{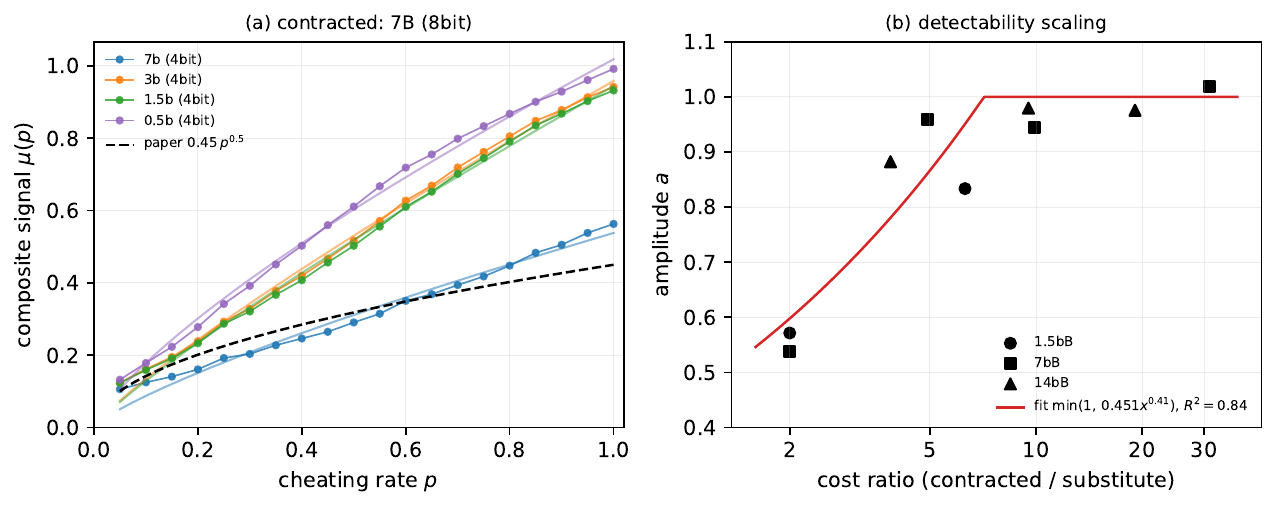}
\caption{Cross-scale detectability. Panel (a): measured composite response $\mu(\Cheat)$ for the 7B (8-bit) contracted model against four substitutes, with fitted $a\Cheat^{q}$ curves; the dashed line is the conservative calibration $0.45\Cheat^{0.5}$. Panel (b): fitted amplitude against the contracted-to-substitute cost ratio for all nine (contracted, substitute) pairs (marker shape denotes the contracted scale), with the censored fit $\min(1,0.45x^{0.41})$; amplitude grows with the cost gap and saturates at the signal ceiling.}
\label{fig:mu-scaling}
\end{figure}

\paragraph*{Protocol parameters and cheating gain.} The per-protocol pairs $(\Chal,\Slash_{\max})$ in \cref{tab:eigenai-params} are taken from the EigenAI and VeriLLM whitepapers and the Sertn AVS documentation, rounded to two significant digits~\cite{eigenai2026,verillm2025,sertn-avs}. The stake-to-reward ratio $\Stake_0/r\in[10^3,10^4]$ comes from the EigenLayer operator stake distribution. The cheating gain $\Saving/r\in[0.15,0.40]$ is the API-price differential between the contracted model and a quantized or distilled substitute; it depends on the AVS's pricing and substitute choice and should be treated as a deployment-fixed parameter, not a universal constant.

\paragraph*{What is not calibrated to deployment data.} The headline simulation regime of \cref{sec:evaluation} uses elevated $(\Chal,\Slash)$ values, chosen so that 1-Round IC binds exactly on the memoryless baseline within the simulation's seed budget, maximizing signal-to-noise rather than representativeness. Deployed-regime conclusions rely on the protocol whitepaper parameters above.

\section{Sensitivity Analysis}
\label{app:sensitivity}

\begin{table}[t]
\centering
\caption{Representative parameters for $\Disc=0.95$ and $\Saving/r=0.25$ (EigenAI-comparable workload). $\Slash_0$ is set to the deployed memoryless value of \cref{tab:eigenai-params}, so cold-state security weakly dominates the one-round IC benchmark.}
\label{tab:params}
\begin{tabular}{lll}
\toprule
Parameter & Symbol & Value \\
\midrule
Baseline challenge rate & $\Chal_0$ & $0.01$ \\
Suspicion sensitivity & $\beta$ & $0.50$ \\
Suspicion decay & $\lambda$ & $0.02$ with half-life $\approx 35$ rounds \\
Reputation prior & $\pi_0$ & $0.05$ \\
Robust cheating-rate floor & $\underline\Cheat$ & $0.10$ \\
Baseline slashing fraction & $\Slash_0$ & $0.20$ \\
Reputation-weighted fraction & $\Slash_1$ & $0.30$ \\
Vesting period & $\Vest$ & $200$ rounds \\
\bottomrule
\end{tabular}
\end{table}

\cref{tab:sensitivity} reports the effect of perturbing each calibration parameter on the threshold $\Disc^\ast$ of \cref{thm:main}, computed with the fixed-point solver \texttt{solve\_disc\_star.py} in the artifact at the stress calibration of \cref{tab:profit-comparison}. Every parameter except $\beta$ moves $\Disc^\ast$ by less than $10^{-5}$; $\beta\pm0.10$ moves $\Disc^\ast$ within $\{0.9155,0.9469\}$, and perturbing all parameters adversarially at once stays within the same band. The leverage of $\beta$ is expected from \eqref{eq:disc-star}, since $\beta$ scales the steady-state audit rate directly. In deployment, $\beta$ is chosen by inverting the suspicion-score gain against a target false-positive rate, so it is a design knob rather than an operational unknown.

\begin{table}[t]
\centering
\caption{Sensitivity of $\Disc^\ast$ (base value $0.9322$) to parameter perturbations at the stress calibration of \cref{tab:profit-comparison}. ``$-$''/``$+$'' denote a $10\%$ decrease/increase of the parameter ($\pm0.03$ for $\Disc$).}
\label{tab:sensitivity}
\begin{tabular}{lcc}
\toprule
Perturbation & $\Disc^\ast$ at $-$ & $\Disc^\ast$ at $+$ \\
\midrule
None (base) & \multicolumn{2}{c}{$0.9322$} \\
$\Disc$ & $0.9322$ & $0.9322$ \\
$\Vest$ & $0.9322$ & $0.9322$ \\
$\lambda$ & $0.9322$ & $0.9322$ \\
$\pi_0$ & $0.9322$ & $0.9322$ \\
$\beta$ & $0.9469$ & $0.9155$ \\
All, adversarial & \multicolumn{2}{c}{$0.9469$} \\
\bottomrule
\end{tabular}
\end{table}

\section{Additional Evaluation Results}
\label{app:eval-extra}

\begin{table}[t]
\centering
\caption{Deviation-profit fraction $\Delta U/U_P(\HONEST)$ from \cref{lem:gap} on published parameter ranges ($\Disc=0.95$). Per-protocol source URLs and derivation notes are in \texttt{CALIBRATION\_PROVENANCE.md} in the artifact.}
\label{tab:eigenai-params}
\begin{tabular}{lrrrr}
\toprule
Protocol & $\Chal$ & $\Slash_{\max}$ & $\Saving/r$ & $\Delta U/U_P(\HONEST)$ \\
\midrule
EigenAI~\cite{eigenai2026} & $0.01$ & $0.20$ & $0.20$--$0.30$ & $5.4\%$--$8.2\%$ \\
VeriLLM~\cite{verillm2025} & $0.01$ & $0.15$ & $0.20$--$0.30$ & $2.6\%$--$3.9\%$ \\
Sertn AVS~\cite{sertn-avs} & $0.005$ & $0.10$ & $0.15$--$0.25$ & $1.5\%$--$2.5\%$ \\
\bottomrule
\end{tabular}
\end{table}

\begin{figure}[t]
\centering
\includegraphics[width=0.78\linewidth]{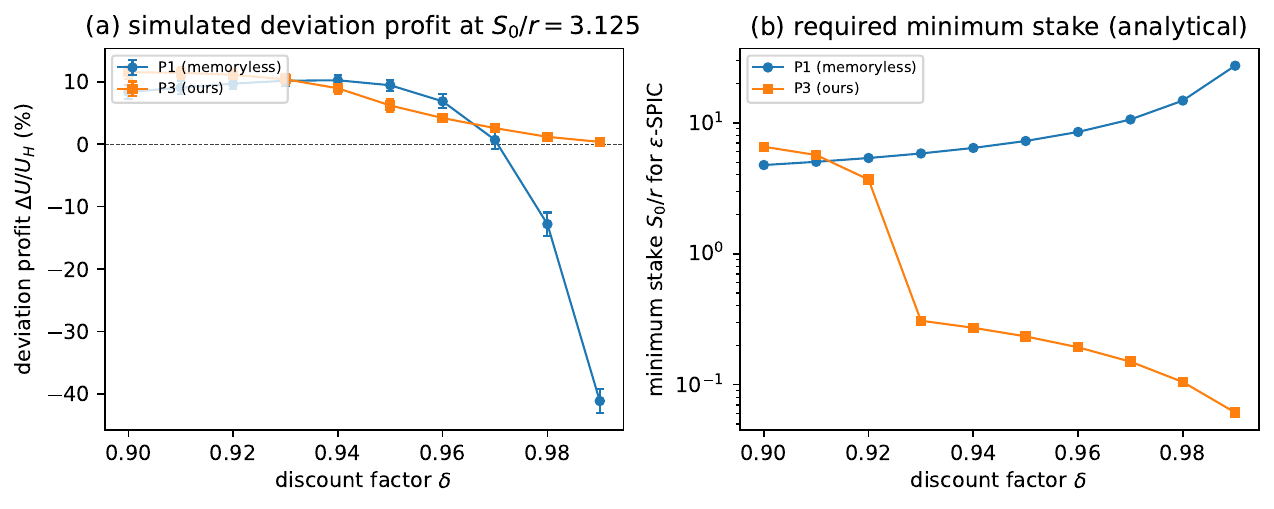}
\caption{Sensitivity to operator patience, $\Disc\in[0.90,0.99]$. Panel (a): simulated deviation profit of the stationary best response at the stress calibration, mean $\pm$ 95\% CI; P3 overtakes P1 near the predicted threshold $\Disc^\ast\approx0.9322$ of \cref{thm:main}, and P1's non-monotonicity reflects minimum-stake ejection at high $\Disc$. Panel (b): analytical minimum stake for $\Eps$-$\infty$-SPIC (log scale); P3's requirement stays bounded while the memoryless requirement \eqref{eq:tspic-condition} diverges, consistent with \cref{cor:asymptotic}.}
\label{fig:disc-sweep}
\end{figure}

\begin{table}[t]
\centering
\caption{Long-run deviation-profit fraction $\Delta U/U_P(\HONEST)$, mean $\pm$ 95\% CI over 100 seeds at $T=10^5$ (the honest benchmark A1 is $0.0\%$ everywhere and is omitted). Calibration: $\Disc=0.95$, $\Saving/r=0.25$, P1 $(\Chal,\Slash)=(0.20,0.50)$, P2 $(\Chal,\Slash)=(0.20,0.40)$, and P3 with $\Chal_0=0.20$, $\beta=0.50$, $\Slash_0=0.10$, $\Slash_1=0.40$, and $\Vest=200$.}
\label{tab:profit-comparison}
\begin{tabular}{lrrr}
\toprule
Adversary & P1 memoryless & P2 memoryless & P3 ours \\
\midrule
A2 constant cheat, $\Cheat=0.5$ & $+4.15\pm 0.66\%$ & $+5.24\pm 0.59\%$ & $+5.06\pm 0.62\%$ \\
A3 periodic & $+10.34\pm 0.93\%$ & $+14.09\pm 0.61\%$ & $+6.70\pm 1.11\%$ \\
A4 analytical best response & $+9.44\pm 0.88\%$ & $+13.60\pm 0.53\%$ & $+1.77\pm 0.38\%$ \\
A4$'$ MC-corrected best response & $\mathbf{+9.44\pm 0.88\%}$ & $\mathbf{+13.60\pm 0.53\%}$ & $\mathbf{+6.20\pm 0.98\%}$ \\
A5 UCB1 bandit & $+6.42\pm 0.75\%$ & $+9.19\pm 0.50\%$ & $+4.60\pm 0.85\%$ \\
A6 cheat then exit & $+11.78\pm 0.79\%$ & $+13.53\pm 0.44\%$ & $+9.34\pm 1.42\%$ \\
A7 vesting drain & $+10.31\pm 0.92\%$ & $+14.01\pm 0.60\%$ & $+6.70\pm 1.11\%$ \\
\bottomrule
\end{tabular}
\end{table}

\paragraph*{Comparison with simpler fixes.}
A natural question is whether the repeated-game gap can be closed by less structured changes to the collateral rule: full slashing, a single component of the mechanism in isolation, or a uniformly higher audit rate. We benchmark all three.

\paragraph*{Full slashing.} P1-FS replaces P1's proportional slashing with full confiscation ($\Slash=1$), with ejection at zero stake. As \cref{tab:baselines} shows, full slashing drives every strategic adversary deeply net-negative ($-84\%$ to $-87\%$), confirming \cref{rem:full-slashing}: once detection terminates the relationship, the one-round condition suffices. The cost is borne exactly where audits are imperfect: every false-positive disagreement destroys the operator's entire stake, and each detected fault ejects the operator, so the operator set churns under correlated events. Deployed systems therefore keep $\Slash$ partial, and the gap analysis of \cref{sec:gap} applies to the whole partial-slashing regime.

\paragraph*{Single components.} The three ablation columns of \cref{tab:baselines} activate one component of P3 at a time: P3-chalOnly keeps history-dependent challenges ($\Slash_1=0$, $\Vest=0$), P3-repOnly keeps reputation-weighted slashing ($\beta=0$, $\Vest=0$), and P3-vestOnly keeps vesting ($\beta=0$, $\Slash_1=0$). No single component comes close to the full mechanism: the best, reputation-weighted slashing, holds A4$'$ to $12.1\%$ where the full mechanism holds it to $6.2\%$, and the challenge-only and vesting-only variants are weaker than the P1 baseline they share the stress calibration with. The three instruments are complementary: challenges shrink the detection-free run length, reputation prevents penalty decay, and vesting closes the exit path.

\paragraph*{Higher audit rates.} The third candidate keeps P1's memoryless structure and raises the constant audit probability to the rate $\Chal^\ast$ that lets P1 match P3's security at P3's own required stake, inverting \eqref{eq:tspic-condition}. In the deployed-challenge regime, $\Chal^\ast$ is feasible but grows from $1.1\times$ P3's baseline audit rate at $\Disc=0.95$ to $2.5\times$ at $\Disc=0.995$, and a simulation check at $\Disc=0.95$ confirms the inversion is tight (A4 deviation profit $-0.02\%\pm0.08\%$). In the high-challenge regime the fix is infeasible at any rate: P3 requires only $\Stake_0/r=0.23$, and at that stake the per-round gain exceeds the expected penalty even at $\Chal=1$. Raising audit rates also raises verification cost linearly for every honest query, whereas P3 keeps the honest-path audit rate at $\Chal_0$.

\cref{fig:disc-sweep} summarizes the patience dimension: above $\Disc^\ast$, P3's required stake collapses while the memoryless requirement diverges, and the simulated deviation profit of the best response falls toward zero. Below $\Disc^\ast$, the stress-calibrated mechanism is somewhat weaker than P1 because its cold-state slashing was deliberately lowered; the deployment sweep in \cref{tab:coldstart} shows this gap closes once $\Slash_0$ preserves cold-state 1-Round IC.

\begin{table}[t]
\centering
\caption{Deviation-profit fraction $\Delta U/U_P(\HONEST)$ for simpler-fix baselines, mean $\pm$ 95\% CI over 100 seeds at $T=10^5$, stress calibration of \cref{tab:profit-comparison}. Single-component variants activate exactly one P3 component (see text) and inherit the stress cold state $\Slash_0=0.10$. A1 is $0.0\%$ everywhere; A2/A3/A5 follow the same ordering.}
\label{tab:baselines}
\footnotesize
\setlength{\tabcolsep}{3pt}
\begin{tabular}{lrrrrrr}
\toprule
Adversary & P1 & P1-FS & chalOnly & repOnly & vestOnly & P3 ours \\
\midrule
A4$'$ best response & $+9.44\pm 0.88\%$ & $\mathbf{-86.86\pm 3.83\%}$ & $+19.81\pm 0.22\%$ & $+12.08\pm 0.81\%$ & $+20.41\pm 0.23\%$ & $+6.20\pm 0.98\%$ \\
A6 cheat then exit & $+11.78\pm 0.79\%$ & $\mathbf{-83.75\pm 4.41\%}$ & $+18.19\pm 0.24\%$ & $+14.07\pm 0.62\%$ & $+18.87\pm 0.25\%$ & $+9.34\pm 1.42\%$ \\
A7 vesting drain & $+10.31\pm 0.92\%$ & $\mathbf{-83.75\pm 4.41\%}$ & $+20.10\pm 0.24\%$ & $+15.55\pm 0.80\%$ & $+20.67\pm 0.25\%$ & $+6.70\pm 1.11\%$ \\
\bottomrule
\end{tabular}
\end{table}

\begin{table}[t]
\centering
\caption{Minimum analytical stake $\Stake/r$ for $\Eps=10^{-2}$-$\infty$-SPIC at $\Saving/r=0.25$. Memoryless requirements follow \cref{thm:gap}; P3 requirements invert \eqref{eq:disc-star} (\texttt{exp\_cost\_comparison.py}). Deployed-cr uses EigenAI-class parameters ($\Chal=0.01$, $\Slash=0.20$ for P1; P3 as in \cref{tab:params}); high-cr uses the simulation calibration of \cref{tab:profit-comparison}.}
\label{tab:cost}
\begin{tabular}{lrrr|rrr}
\toprule
& \multicolumn{3}{c|}{\textbf{Deployed-cr} EigenAI-class} & \multicolumn{3}{c}{\textbf{High-cr} simulated} \\
$\Disc$ & P1 mem. & P3 ours & ratio & P1 mem. & P3 ours & ratio \\
\midrule
$0.95$ & $129.8$ & $\mathbf{120.4}$ & $0.93\times$ & $7.25$ & $\mathbf{0.23}$ & $0.03\times$ \\
$0.98$ & $137.3$ & $\mathbf{114.0}$ & $0.83\times$ & $14.75$ & $\mathbf{0.10}$ & $0.007\times$ \\
$0.99$ & $149.8$ & $\mathbf{106.7}$ & $0.71\times$ & $27.25$ & $\mathbf{0.06}$ & $0.002\times$ \\
$0.995$ & $174.8$ & $\mathbf{99.8}$ & $0.57\times$ & $52.25$ & $\mathbf{0.04}$ & $0.0008\times$ \\
$\to 1$ & $\nearrow\infty$ & finite & $\to 0$ & $\nearrow\infty$ & finite & $\to 0$ \\
\bottomrule
\end{tabular}
\end{table}

\begin{figure}[t]
\centering
\includegraphics[width=0.68\linewidth]{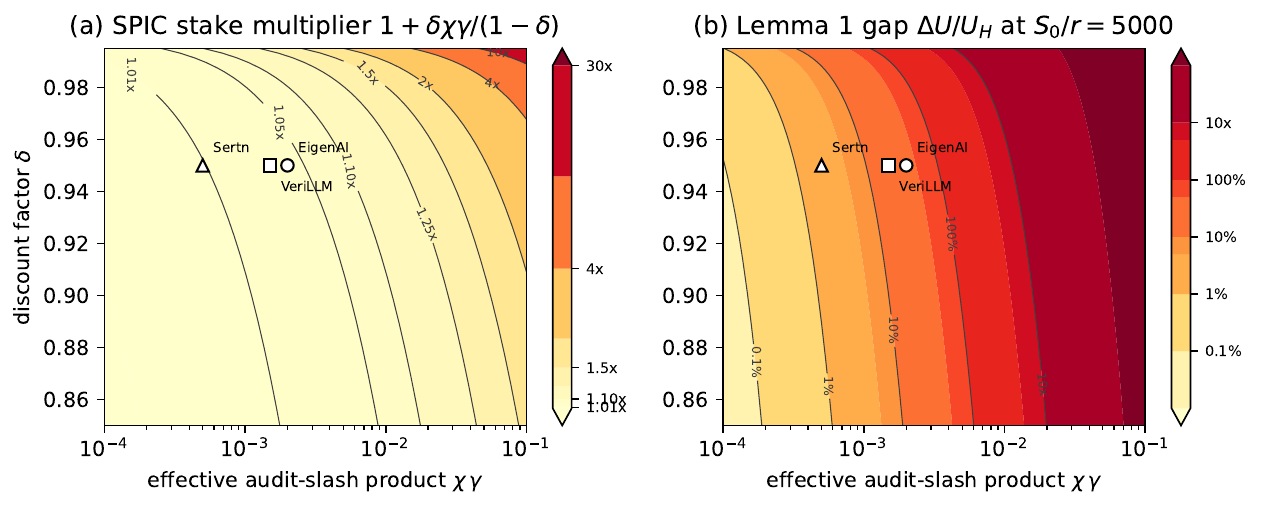}
\caption{Repeated-game gap on the $(\Chal\Slash,\Disc)$ plane. Panel (a) reports the stake multiplier in \eqref{eq:tspic-condition}, which is the factor by which $\infty$-SPIC requires more stake than 1-Round IC. Panel (b) reports the deviation-profit fraction in \cref{lem:gap} at $\Stake_0/r=5{,}000$. EigenAI, VeriLLM, and Sertn AVS lie in the $1.05$ to $1.20$ stake-multiplier band and the $1\%$ to $10\%$ deviation-profit band.}
\label{fig:gap-heatmap}
\end{figure}

\begin{table}[h]
\centering
\caption{Cold-start sweep on P3, using \texttt{exp\_coldstart\_sweep.py} with 30 seeds and $T=10^4$. The calibration matches \cref{tab:profit-comparison} except that $\Slash_0$ varies and $\pi_0=0.20$. Entries report mean $\pm$ 95\% confidence interval. The P3 column in \cref{tab:profit-comparison} corresponds to the stress setting $\Slash_0=0.10$. Once $\Slash_0$ preserves cold-state 1-Round IC, A6 becomes net-negative. A4$'$ is eliminated at $\Slash_0=0.30$.}
\label{tab:coldstart}
\begin{tabular}{lrr}
\toprule
$\Slash_0$ & A4$'$ on P3 & A6 on P3 \\
\midrule
$0.10$ stress setting & $+6.99 \pm 1.78\%$ & $+8.80 \pm 2.63\%$ \\
$0.15$ & $+4.84 \pm 1.83\%$ & $+5.18 \pm 2.91\%$ \\
$0.20$ EigenAI-class deployed value & $+2.92 \pm 1.53\%$ & $-1.23 \pm 2.87\%$ \\
$0.30$ & $+0.00 \pm 0.00\%$ & $-8.77 \pm 3.60\%$ \\
$0.50$ P1 stress baseline & $+0.00 \pm 0.00\%$ & $-31.44 \pm 4.68\%$ \\
\bottomrule
\end{tabular}
\end{table}

\begin{figure}[h]
\centering
\includegraphics[width=0.75\linewidth]{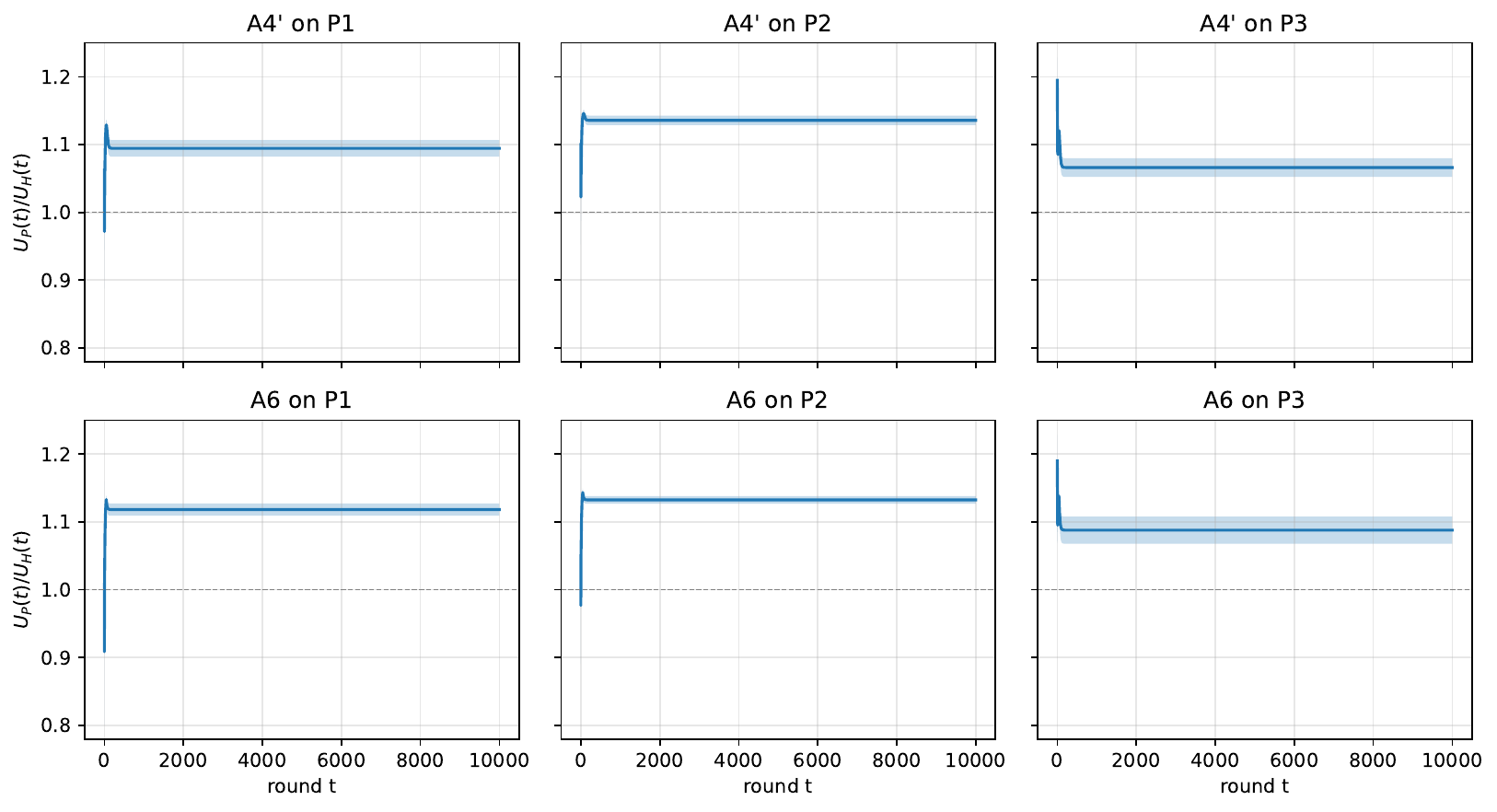}
\caption{Cumulative discounted payoff ratio $U_P(t)/U_H(t)$ over time. The top panel reports A4$'$, the Monte Carlo corrected stationary best response. The bottom panel reports A6, the cheat-then-exit adversary. Shaded bands are 95\% confidence intervals across 50 seeds at $T=10^4$. P1 and P2 rise above the honest baseline quickly. P3 grows more slowly under the stress setting and is further disciplined by ejection and vesting.}
\label{fig:cumulative}
\end{figure}

\bibliography{bibliography}
\end{document}